\newtheorem{theorem}{Theorem}
\newtheorem{lemma}{Lemma}    
\newtheorem{remark}{Remark}
\newtheorem{example}{Example}    
\newtheorem{proposition}{Proposition}    
\newtheorem{definition}{Definition}
\title{{\LARGE \bf Stability Analysis of Complementarity Systems with Neural Network Controllers}}
\author{Alp Aydinoglu \thanks{Department of Electrical and Systems Engineering, University of Pennsylvania. Email: \{alpayd, morari, posa\}@seas.upenn.edu} \and Mahyar Fazlyab  \thanks{Mathematical Institute for Data Science, Johns Hopkins University. Email: mahyarfazlyab@jhu.edu} \and Manfred Morari \footnotemark[1] \and Michael Posa \footnotemark[1]
}
\date{}
\begin{document}
\pagestyle{plain}
%\pagenumbering{arabic}
\maketitle

\begin{abstract}
	Complementarity problems, a class of mathematical optimization problems with orthogonality constraints, are widely used in many robotics tasks, such as locomotion and manipulation, due to their ability to model non-smooth phenomena (e.g., contact dynamics). In this paper, we propose a method to analyze the stability of complementarity systems with neural network controllers. First, we introduce a method to represent neural networks with rectified linear unit (ReLU) activations as the solution to a linear complementarity problem. Then, we show that systems with ReLU network controllers have an equivalent linear complementarity system (LCS) description. Using the LCS representation, we turn the stability verification problem into a linear matrix inequality (LMI) feasibility problem. We demonstrate the approach on several examples, including multi-contact problems and friction models with non-unique solutions.
\end{abstract}

\section{Introduction}

%A fundamental aspect in the design of feedback controllers for dynamical systems to ensure that the closed-loop system satisfies certain specifications such as stability and safety. This is a formidable challenge when the dynamical system or the environment it interacts with is unknown, uncertain, or otherwise difficult to model. 

Due to recent advancements in deep learning, there has been an increasing interest in using neural networks (NNs) to stabilize dynamical systems. For instance, neural networks have been used to approximate model predictive control policies through supervised learning \cite{parisini1995receding,zhang2019safe,karg2020efficient,hertneck2018learning}, or reinforcement learning \cite{chen2018approximating}. 
Although neural network controllers can achieve satisfactory performance under less restrictive assumptions about the model of the dynamical system or the environment it operates in, they lack guarantees. This drawback limits the application of neural networks in safety-critical systems, in which simpler control strategies, although potentially inferior to deep neural networks in performance, do have performance guarantees. Therefore, it is critical to develop tools that can provide useful certificates of stability, and robustness for NN-driven systems.

Many important robotics systems are non-smooth and researchers have shown the effectiveness of NN policies \cite{xie2018feedback, haarnoja2018learning, zhu2019dexterous} on such systems without providing formal guarantees.
%Therefore, we need new tools to analyze non-smooth systems driven by neural networks.
The goal of this paper is to introduce a method for stability analysis of non-smooth systems in feedback loops with NN controllers.
Our framework is inspired by complementarity systems \cite{heemels2000linear}, differential equations coupled with the solution of a linear complementarity problem. Complementarity problems are a class of mathematical optimization problems with orthogonality constraints \cite{cottle2009linear}. Linear complementarity problems, in particular, are widely used in computational non-smooth mechanics with unilateral contacts and friction \cite{brogliato1999nonsmooth}, and more generally, in applications involving quadratic programming \cite{o2020operator}. In simple terms, a linear complementarity problem can be stated as the following potentially non-convex quadratic optimization problem,
\begin{align*}
	\operatorname{minimize} \ \lambda^\top (F \lambda + q) \quad \text{subject to } F\lambda + q \geq 0, \ \lambda \geq 0.
\end{align*}
With the objective function being non-negative, the solutions to the optimization problem satisfy the complementarity condition $(F\lambda + q)^\top \lambda = 0$.
In the context of contact dynamics, for example, one can interpret $\lambda$ as a contact force between a robot and a surface, and $F \lambda + q$ is a gap function relating the contact force and the distance from the robot to the contact surface.
Because of their ability to model set-valued and non-smooth functions, complementarity problems are widely used within the robotics community, particularly to simulate contact dynamics \cite{stewart1996implicit, halm2018quasi}, leveraged in trajectory optimization \cite{posa2014direct}, and stability analysis of rigid-body systems with contacts \cite{posa2015stability, camlibel2007lyapunov, aydinoglu2020contact}.

\subsection{Related Work}

The connection between nonlinearities in neural networks and mathematical optimization has been exploited recently in various contexts. 
%
%While there are many works on safety and reachability analysis of neural network controlled systems \cite{fazlyab2019safety, dutta2019reachability, ivanov2019verisig, huang2019reachnn, hu2020reach}, stability analysis of neural networks in feedback with dynamical systems has received less attention. 
In \cite{raghunathan2018semidefinite,fazlyab2019safety,fazlyab2019probabilistic} the authors use quadratic constraints to describe ReLU activation functions followed by a semidefinite relaxation to perform robustness 
analysis of ReLU networks. In \cite{fazlyab2019efficient}, the authors exploit the fact that all commonly used activation functions in deep learning are gradients of convex potentials, hence they satisfy incremental quadratic constraints that can be used to bound the global Lipschitz constant of feed-forward neural networks.  The work in \cite{amos2017optnet} integrates quadratic programs as end-to-end trainable deep networks to encode constraints and more complex dependencies  between the
hidden states. Yin et al. \cite{yin2020stability} considers uncertain linear time-invariant systems with neural network controllers. By over approximating the input-output map of the neural network and uncertainties by quadratic and integral quadratic constraints, respectively, the authors develop an SDP whose solution yields quadratic Lyapunov functions.
In \cite{chen2020learning} the authors develop a learning-based iterative sample guided strategy based on the analytic center cutting plane method to search for Lyapunov functions for piece-wise affine systems in feedback with ReLU networks where the generation of samples relies on solving mixed-integer quadratic programs. In \cite{karg2020stability}, the authors use
a mixed-integer linear programming formulation to perform output range analysis of ReLU neural networks and provide guarantees for constraint satisfaction and asymptotic stability of the closed-loop system.

\subsection{Contributions}
Inspired by the connection between ReLU functions and linear complementarity problems, we develop a method to analyze linear complementarity systems in feedback with ReLU network controllers. Our starting point is to show that a single ReLU activation can be expressed as the solution to a linear complementarity problem (Lemma \ref{lemma:LCP_max_eqivalency}). Using this, we show that we can represent ReLU neural networks as linear complementarity problems (Lemma \ref{LCP_ReLU_equivalency}).
%The connection between ReLU activated DNN's and quadratic constraints has already been explored in \cite{raghunathan2018semidefinite}, \cite{fazlyab2019safety}.
%Inspired by these results, we introduce a method to represent neural networks with rectified linear unit activation functions as linear complementarity problems (Theorem \ref{LCP_ReLU_equivalency}) where our starting point is to show that a single ReLU activation can be expressed as the solution to a linear complementarity problem (Lemma \ref{lemma:LCP_max_eqivalency}). 
Next, we demonstrate that linear complementarity systems with neural network controllers have an equivalent LCS representation. Then, we leverage the theory of stability analysis for complementarity systems and derive the discrete time version of the results in \cite{camlibel2007lyapunov}. We describe the sufficient conditions for stability in the form of Linear Matrix Inequalities (LMI's). Denoting by $N$ the number of neurons in the network plus the number of complementarity variables in the LCS, the size of the LMI's scales linearly with $N$. Furthermore, the maximum possible number of decision variables in our LMI scales quadratically with $N$. To the best of our knowledge, this is the first work on analyzing the stability of LCS systems with neural network controllers.

\section{Background}

\subsection{Notation}

We denote the set of non-negative integers by $\mathbb{N}_0$, the set of d-dimensional vectors with real components as $\mathbb{R}^d$ and the set of $n \times m$ dimensional matrices by $\mathbb{R}^{n \times m}$. 
For two vectors $a \in \mathbb{R}^m$ and $b \in \mathbb{R}^m$, we use the notation $0 \leq a \perp b \geq 0$ to denote that $a \geq 0, \; b \geq 0, \; a^T b = 0$.
For a positive integer $l$, $\bar{l}$ denotes the set $\{ 1, 2, \ldots, l \}$. 
Given a matrix $M \in \mathbb{R}^{k \times l}$ and two subsets $I \subseteq \bar{k}$ and $J \subseteq \bar{l}$, we define $M_{I J} = (m_{i j})_{i \in I, j \in J}$. 
For the case where $J = \bar{l}$, we use the shorthand notation $M_{I \bullet}$. 

\subsection{Linear Complementarity Problem}

The theory of linear complementarity problems (LCP) will be used throughout this work \cite{cottle2009linear}.
\begin{definition} \label{def: LCS_definition}
	Given a vector $q \in \mathbb{R}^m$, and a matrix ${F \in \mathbb{R}^{m \times m}}$, the $LCP(q,F)$  describes the following mathematical program:
	\begin{alignat}{2}
		\label{LCS_definiton}
		\notag & \underset{}{\text{find}} && \lambda \in \mathbb{R}^m \\
		\notag & \text{subject to}  \quad && y = F \lambda + q,\\
		\notag & \quad && 0 \leq \lambda \perp y \geq 0.
	\end{alignat}
\end{definition}
The solution set of the $LCP(q,F)$ is denoted by
\begin{equation*}
	\text{SOL}(q,F) = \{ \lambda : y = F \lambda+q, 0 \leq \lambda \perp y \geq 0 \}.
\end{equation*}
The $\text{LCP}(q,F)$ may have multiple solutions or none at all. 
The cardinality of the solution set $\text{SOL}(q,F)$ depends on the matrix $F$ and the vector $q$. In particular, if $F$ is a P-matrix, $\text{SOL}(q,F)$ is always a singleton.
%Now, we introduce a class of matrices $F$ such that the solution set, $\text{SOL}(q,F)$, is always a singleton.
\begin{definition}
	A matrix $F \in \mathbb{R}^{m \times m}$ is a P-matrix, if the determinant of all of its principal sub-matrices are positive; that is, $\text{det}(F_{\alpha \alpha}) > 0$ for all $\alpha \subseteq \{ 1, \ldots, m \} $.
\end{definition}
The solution set $\text{SOL}(q,F)$ is a singleton for all $q$ if $F$ is a P-matrix \cite{cottle2009linear}.
If we denote the unique element of $\text{SOL}(q,F)$ as $\lambda(q)$, then $\lambda(q)$ is a piece-wise linear function of $q$. 
We can describe this function explicitly as in \cite{camlibel2007lyapunov}. Consider $y = F \lambda(q) + q$, and define the index sets
\begin{equation*}
	\alpha(q) = \{ i : \lambda_i (q) > 0 = y_i \},
\end{equation*}
\begin{equation*}
	\beta(q) = \{ i : \lambda_i (q) = 0 \leq y_i \},
\end{equation*}
Then, $\lambda(q)$ is equivalent to
\begin{equation}
	\label{eq:piecewise_affine_rep}
	\lambda_{\alpha}(q) = - ( F_{\alpha \alpha} )^{-1} I_{\alpha \bullet} q, \; \; \lambda_{\beta} (q) = 0,
\end{equation}
where $\alpha = \alpha(q)$ and $\beta = \beta(q)$. Furthermore, $\lambda(q)$ as in \eqref{eq:piecewise_affine_rep} is Lipschitz continuous since it is a continuous piece-wise linear function of $q$ \cite{scholtes2012introduction}.

\subsection{Linear Complementarity Systems}
We are now ready to introduce linear complementarity systems (LCS). 
In this work, we consider an LCS as a difference equation coupled with a variable that is the solution of an LCP.

\begin{definition}
	A linear complementarity system describes the  trajectories $( x_k )_{k \in \mathbb{N}_{0}}$ and $( \lambda_k )_{k \in \mathbb{N}_{0}}$ for an input sequence $( u_k )_{k \in \mathbb{N}_{0}}$ starting from $x_0$ such that
	\begin{equation}
		\label{eq:LCS}
		\begin{aligned}
			& x_{k+1} = A x_k + B u_k + D\lambda_k + z,\\
			& y_k = Ex_k +  F \lambda_k + H u_k + c, \\
			& 0 \leq \lambda_k \perp y_k \geq 0.
		\end{aligned}
	\end{equation}
	where $x_k \in \mathbb{R}^{n_x}$, $\lambda_k \in \mathbb{R}^{n_{\lambda}}$, $u_k \in \mathbb{R}^{n_u}$, $A \in \mathbb{R}^{n_x \times n_x}$, $B \in \mathbb{R}^{n_x \times n_u}$, $D \in \mathbb{R}^{n_x \times n_\lambda}$, $z \in \mathbb{R}^{n_x}$, $E \in \mathbb{R}^{n_{\lambda} \times n_x}$, $F \in \mathbb{R}^{n_x \times n_\lambda}$, $H \in \mathbb{R}^{n_\lambda \times n_u}$ and $c \in \mathbb{R}^{n_\lambda}$.
\end{definition}

For a given $k$, $x_k$ and $u_k$, the corresponding complementarity variable $\lambda_k$ can be found by solving $\text{LCP}(E x_k + H u_k + c, F)$ (see Definition \ref{def: LCS_definition}). Similarly, $x_{k+1}$ can be computed using the first equation in \eqref{eq:LCS} when $x_k, u_k$ and $\lambda_k$ are known. 
In general, the trajectories $ (x_k)$ and $(\lambda_k)$ are not unique since $\text{SOL}(Ex_k + H u_k + c, F)$ can have multiple elements; hence, \eqref{eq:LCS} is a difference \emph{inclusion} \cite{dontchev1992difference}.

In this work, we will focus on autonomous linear complementarity systems (A-LCS) because we consider the input as a function of the state and the complementarity variable, i.e., $u = u(x_k, \lambda_k)$. An A-LCS represents the evolution of trajectories $( x_k )_{k \in \mathbb{N}_{0}}$ and $( \lambda_k )_{k \in \mathbb{N}_{0}}$ according to following dynamics,
\begin{equation}
	\label{eq:auto_LCS}
	\begin{aligned}
		& x_{k+1} = Ax_k + D\lambda_k + z,\\
		& y_k = Ex_k +  F \lambda_k + c, \\
		& 0 \leq \lambda_k \perp y_k \geq 0,
	\end{aligned}
\end{equation}
and unlike \eqref{eq:LCS} there is no input. Moving forward, we will consider A-LCS models that can have non-unique trajectories. 

We note that, however, the existence of a special case of \eqref{eq:auto_LCS} is continuous piecewise affine systems \cite{heemels2001equivalence}. If $F$ is a P-matrix, then $\lambda(x_k)$ is unique for all $x_k$ and \eqref{eq:auto_LCS} is equivalent to
\begin{equation*}
	\begin{aligned}
		& x_{k+1} = Ax_k + B u_k + D\lambda(x_k) + z,
	\end{aligned}
\end{equation*}
where $\lambda(x_k)$ is the unique element of $\text{SOL}(Ex_k + c, F)$ and can be explicitly described as in \eqref{eq:piecewise_affine_rep}. 
In this setting, \eqref{eq:LCS} is a piece-wise affine dynamical system and has a unique solution for any initial condition $x_0$.

\subsection{Stability of A-LCS}

We introduce the notions of stability for A-LCS that are similar to \cite{smirnov2002introduction}. An equilibrium point $x_e$ for \eqref{eq:auto_LCS} is defined as a point that satisfies $x_e = A x_e + D \lambda_e + z$ where $\lambda_e = \text{SOL}(Ex_e+c,F)$ is a singleton.
Without loss of generality, we assume $x_e = 0$ is an equilibrium of the system, i.e., $D \text{SOL}(c, F) = \{ -z \}$.

\begin{definition}
	The equilibrium $x_e = 0$ of A-LCS is
	\begin{enumerate}
		\item stable if for any given $\epsilon > 0$, there exists a $\delta > 0$ such that
		\begin{equation*}
			||x_0|| \leq \delta \implies ||x_k|| \leq \epsilon \; \forall k \geq 0,
		\end{equation*}
		for any trajectory $\{x_k\}$ starting from $x_0$,
		\item asymptotically stable if it is stable and there is a $\delta > 0$ such that
		\begin{equation*}
			||x_0|| \leq \delta \implies \lim_{k \rightarrow \infty} ||x_k|| = 0,
		\end{equation*}
		for any trajectory $\{x_k\}$ starting from $x_0$.
		\item geometrically stable if there exists $\delta > 0$, $\alpha > 1$ and $0 < \rho <1$ such that
		\begin{equation*}
			||x_0|| \leq \delta \implies ||x_k|| \leq \alpha \rho^k ||x_0|| \; \forall k \geq 0,
		\end{equation*}
		for any trajectory $\{x_k\}$ starting from $x_0$.
	\end{enumerate}
\end{definition}
Notice that if $F$ is a P-matrix, these are equivalent to the notions of stability for difference equations where the right side is Lipschitz continuous \cite{khalil2002nonlinear} since there is a unique trajectory $\{x_k\}$ starting from any initial condition $x_0$.

\section{Linear Complementarity Systems with Neural Network Controllers}
\label{sec:comp_NN}
In this section, we demonstrate that neural networks with rectified linear units (ReLU) have an equivalent LCP representation. 
Then, we show that an LCS combined with a neural network controller has an alternative complementarity system description.
\begin{definition}
	A ReLU neural network $\phi \colon \mathbb{R}^{n_x} \mapsto \mathbb{R}^{n_\phi}$ with $L$ hidden layers is the composite function
	\begin{equation}
		\label{eq:NN_ReLU}
		\phi(x) = (h_L \circ \lambda_\text{ReLU} \circ h_{L-1} \circ \ldots \circ \lambda_\text{ReLU} \circ h_0)(x),
	\end{equation}
	where $\lambda_\text{ReLU}(x)~=~\max \{0, x \}$ is the ReLU activation layer, and $h_i(x) = \theta_i x + c_i$ are the affine layers with $\theta_i \in \mathbb{R}^{n_{i+1} \times n_i}$, $c_i \in \mathbb{R}^{n_{i+1}}$. Here, $n_{i}, 1 \leq i \leq L$ denotes the number of hidden neurons in the $i$-th layer, $n_0 = n_x$, and $n_{L+1} = n_{\phi}$. We denote by $n_t = \sum_{i=1}^{L} n_i$ the total number of neurons.
\end{definition}

\subsection{Representing ReLU Neural Networks as Linear Complementarity Problems}

ReLU neural networks are piece-wise affine functions. Similarly, the linear complementarity problem describes a piece-wise affine function as shown in \eqref{eq:piecewise_affine_rep} as long as $F$ is a P-matrix. In this section, we will explore the connection between two piece-wise affine representations.

It has been shown that ReLU neural networks can be represented with quadratic constraints \cite{raghunathan2018semidefinite}, \cite{fazlyab2019safety}. Now, we will show the connection between these results and linear complementarity problems. Our goal is to describe a method to represent a multi-layered ReLU neural network as a linear complementarity problem.

First consider a single ReLU unit $\lambda_{\text{ReLU}}(x) = \max(0,x)$ and show its equivalent LCP representation.
\begin{lemma}
	\label{lemma:LCP_max_eqivalency}
	Consider the following LCP for a given $x \in \mathbb{R}^d$:
	\begin{alignat*}{2}
		\notag & \underset{}{\text{find}} && \lambda^{\text{LCP}} \in \mathbb{R}^d \\
		\notag & \text{subject to} \quad && \bar{y} = -x + \lambda^{\text{LCP}}, \\
		\notag & && 0 \leq \lambda^\text{LCP} \perp \bar{y} \geq 0,
	\end{alignat*}
	Then $\lambda^\text{LCP}$ is unique and is given by $\lambda^\text{LCP}= \max \{0, x \}$.
\end{lemma}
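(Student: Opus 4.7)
The plan is to verify the claim in two steps: first establish uniqueness via the P-matrix criterion, and then exhibit $\max\{0,x\}$ as an explicit solution.

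\textbf{Step 1 (Uniqueness).} Here the data of the LCP are $q = -x$ and $F = I_d$. Every principal submatrix of the identity is again an identity matrix of some smaller dimension, hence has determinant $1 > 0$. Thus $I_d$ is a P-matrix, and by the result recalled just after the P-matrix definition in the excerpt, $\mathrm{SOL}(-x, I_d)$ is a singleton. This already handles uniqueness without having to inspect the candidate.

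\textbf{Step 2 (Explicit solution).} It then suffices to check that $\lambda^{\mathrm{LCP}} := \max\{0,x\}$, defined componentwise, satisfies the three LCP conditions. Nonnegativity $\lambda^{\mathrm{LCP}} \geq 0$ is immediate from the definition of the componentwise maximum. For $\bar y = -x + \lambda^{\mathrm{LCP}}$ and the complementarity, I would split into two cases per coordinate $i$: if $x_i \geq 0$, then $\lambda^{\mathrm{LCP}}_i = x_i$ and $\bar y_i = 0$, so both are nonnegative and their product vanishes; if $x_i < 0$, then $\lambda^{\mathrm{LCP}}_i = 0$ and $\bar y_i = -x_i > 0$, again nonnegative with vanishing product. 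Hence $0 \leq \lambda^{\mathrm{LCP}} \perp \bar y \geq 0$.

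\textbf{Conclusion.} Combining Steps 1 and 2, the unique element of $\mathrm{SOL}(-x, I_d)$ is $\max\{0,x\}$, which is exactly the claim. There is no real obstacle here: the only subtlety worth flagging is that uniqueness is conceptually separate from the explicit formula, and it is cleanest to dispatch it first via the P-matrix observation rather than by a case analysis argument on the complementarity conditions themselves.
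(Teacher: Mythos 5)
Your proof is correct and rests on the same per-coordinate case split on the sign of $x_i$ that constitutes the paper's own (one-line) proof. The only difference is cosmetic: you verify the candidate $\max\{0,x\}$ and dispatch uniqueness separately by noting $F=I$ is a P-matrix, whereas the paper's case analysis forces the value of each $\lambda^{\text{LCP}}_i$ directly, yielding existence and uniqueness in one stroke.
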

\begin{proof}
	If $x_i < 0$, then $\lambda^\text{LCP}_i = 0$ and if $x_i \geq 0$, then $\lambda^\text{LCP}_i = x_i$.
\end{proof}
%Notice that $\lambda_\text{LCP} = \lambda_\text{ReLU}$ for $d=1$ and $\bar{y} = -x$. 
Next, we consider a two layered neural network and transform it into an LCP using Lemma \ref{lemma:LCP_max_eqivalency}.
\begin{figure}[t!]
	\centering
	\label{fig:two_layered}
	\includegraphics[width=0.4\columnwidth]{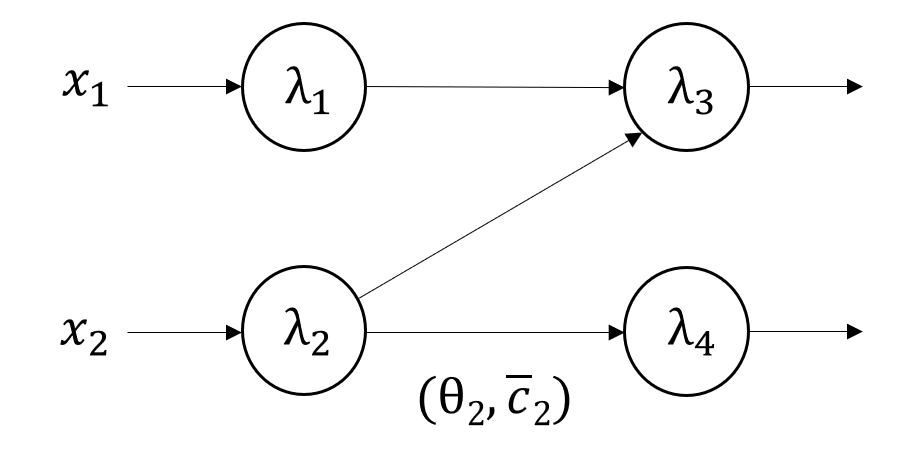}
	\caption{Two-layered neural network with ReLU activation functions.}
	\label{fig:two_layered_NN}
\end{figure}

\begin{example}
	Consider a two layered network shown in Figure \ref{fig:two_layered_NN} where $\phi_\text{2-layer}(x) = \lambda_{\text{ReLU}} \circ h_1 \circ \lambda_{\text{ReLU}} \circ h_0 (x)$ with $h_0(x) = x$ and $h_1(x) = \theta_2 x + \bar{c}_2$. 
	An alternative representation is $\phi_\text{2-layer}(x)$ $=\begin{bmatrix}
		\lambda_3(x)^\top & \lambda_4(x)^\top 
	\end{bmatrix}^\top$ where
	\begin{align}
		\label{eq:lambda1}
		& \lambda_1(x) = \max\{ 0, x_1 \}, \\
		\label{eq:lambda2}
		& \lambda_2(x) = \max\{ 0, x_2 \}, \\
		\label{eq:lambda3}
		& \lambda_3(x) = \max\{ 0, \theta_2^{1,1} \lambda_1(x) + \theta_{2}^{1,2} \lambda_2(x) + \bar{c}_2^1\}, \\
		\label{eq:lambda4}
		& \lambda_4(x) = \max\{ 0, \theta_{2}^{2,2} \lambda_2(x) + \bar{c}_2^2 \}.
	\end{align}
	Here $\lambda_i$ represents the output of the $i$th ReLU activation function, $\theta_{i}^{j,k}$ and $\bar{c}_i^j$ represent the coefficients of the affine function.
	Observe that the two-layered NN is equivalent to $\begin{bmatrix}
		\lambda_3 \\ \lambda_4
	\end{bmatrix}$ where $\lambda$ is the unique solution of the following LCP:
	\begin{alignat*}{2}
		\notag & \underset{}{\text{find}} && \lambda \in \mathbb{R}^4 \\
		& \text{subject to} \quad &&  \bar{y}_1 = -x_1 + \lambda_1, \\
		& \quad && \bar{y}_2 = -x_2 + \lambda_2, \\
		& \quad && \bar{y}_3 = -\theta_2^{1,1} \lambda_1 - \theta_{2}^{1,2} \lambda_2 - \bar{c}_2^1 + \lambda_3, \\
		& \quad && \bar{y}_4 = -\theta_{2}^{2,2} \lambda_2 - \bar{c}_2^2 + \lambda_4, \\
		& \quad && 0 \leq \lambda_1 \perp \bar{y}_1 \geq 0, \\
		& \quad && 0 \leq \lambda_2 \perp \bar{y}_2 \geq 0, \\
		& \quad && 0 \leq \lambda_3 \perp \bar{y}_3 \geq 0, \\
		& \quad && 0 \leq \lambda_4 \perp \bar{y}_4 \geq 0. \\
	\end{alignat*}
	Here, $\{ \lambda_i \}_{i=1}^2$ can be represented as $\lambda_i = \max \{0, x_i \}$ and $\{ \lambda_i \}_{i=3}^4$ are as in \eqref{eq:lambda3}, \eqref{eq:lambda4} after direct application of Lemma \ref{lemma:LCP_max_eqivalency}. Then, we conclude that $\begin{bmatrix}
		\lambda_3 \\ \lambda_4
	\end{bmatrix} = \phi_\text{2-layer}$.
\end{example}
Now, we show that all neural networks of the form \eqref{eq:NN_ReLU} have an equivalent LCP representation.
%
%\note{I changed the indexing of the ReLU network so that it has $L$ layers. Can you propagate this change to the theorem?}
%
\begin{lemma}
	\label{LCP_ReLU_equivalency}
	%Consider the function
	%\begin{equation*}
	%	\Phi(x | \theta, \bar{c}) = \bar{D} \lambda(x) + \bar{z},
	%\end{equation*}
	For any $x$, the ReLU neural network in \eqref{eq:NN_ReLU} can be expressed as $\phi(x) =  \bar{D} \lambda(x) + \bar{z}$, where $\lambda(x)$ is the unique solution of the following linear complementarity problem:
	\begin{alignat}{2}
		\notag & \underset{}{\text{find}} && \lambda\\
		\notag & \text{subject to}  \quad && \bar{y} = \bar{E} x + \bar{F} \lambda + \bar{c},\\
		\notag & \quad && 0 \leq \lambda \perp \bar{y} \geq 0,
	\end{alignat}
	where 
	%$\bar{D} = \theta_L$, $\bar{z} = c_L$, 
	$\bar{c} = \begin{bmatrix}
		-c_0 \\
		-c_1 \\
		\vdots \\
		-c_{L-1}
	\end{bmatrix}, \; \bar{E} = \begin{bmatrix}
		-\theta_0 \\
		0 \\
		\vdots \\
		0
	\end{bmatrix}$, $\bar{F} = \begin{bmatrix}
			I & \; & \; & \;  \\
			-\theta_1 & I  & \; & \;\\
			0 & -\theta_2  & I & \;\\
			0 & 0  & -\theta_3 & I \\
			\vdots & \vdots & \vdots & \vdots & \vdots & \vdots \\
			0 & \ldots & \ldots & \ldots & -\theta_{L-1} & I
		\end{bmatrix}$,

	and $\bar{z} = \bar{c}_L$, $\bar{D} = \begin{bmatrix}
		0 & 0 &\ldots & 0 & \theta_L
	\end{bmatrix}$ where $\bar{F}$ is a P-matrix. 
	%Then, $\phi(x | \theta, \bar{c}) =  \theta_L \lambda(x) + c_L$ for all $x$.
\end{lemma}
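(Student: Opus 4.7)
The plan is to build the LCP by stacking the ReLU outputs of every hidden layer into one vector and applying Lemma \ref{lemma:LCP_max_eqivalency} componentwise. Concretely, I would let $\lambda_i \in \mathbb{R}^{n_i}$ denote the output of the $i$-th ReLU activation layer for $i=1,\dots,L$, set $\lambda_0 := x$, and form the concatenated vector $\lambda = (\lambda_1^\top,\lambda_2^\top,\dots,\lambda_L^\top)^\top$. The forward pass then gives the recursion $\lambda_i = \max\{0,\,\theta_{i-1}\lambda_{i-1} + c_{i-1}\}$ for $i=1,\dots,L$, together with the final affine readout $\phi(x) = \theta_L \lambda_L + c_L$.

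Next I would translate each recursion into a complementarity condition. By Lemma \ref{lemma:LCP_max_eqivalency}, the equation $\lambda_i = \max\{0,\theta_{i-1}\lambda_{i-1} + c_{i-1}\}$ is equivalent to the unique-solution LCP $0 \leq \lambda_i \perp \bar{y}_i \geq 0$ with $\bar{y}_i = \lambda_i - \theta_{i-1}\lambda_{i-1} - c_{i-1}$. Stacking these $L$ conditions and separating the first layer (which depends on $x$) from the rest (which depend only on earlier $\lambda_j$'s) directly reproduces the block form $\bar{y} = \bar{E}x + \bar{F}\lambda + \bar{c}$ stated in the lemma: the $-\theta_0$ block appears in $\bar{E}$, the diagonal $I$'s come from the $\lambda_i$ term in each $\bar{y}_i$, and the sub-diagonal $-\theta_i$ blocks come from the coupling to the previous layer. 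The output equation $\phi(x) = \theta_L \lambda_L + c_L$ then matches $\bar{D}\lambda + \bar{z}$ with $\bar{D}$ selecting the last block and $\bar{z} = c_L$.

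The remaining and, in my view, main structural point is showing that $\bar{F}$ is a P-matrix, which will also give uniqueness of $\lambda(x)$. The key observation is that $\bar{F}$ is block lower triangular with identity blocks on the diagonal. For any index set $\alpha \subseteq \{1,\dots,n_t\}$, the principal submatrix $\bar{F}_{\alpha\alpha}$ inherits this lower-triangular-with-unit-diagonal structure (deleting symmetric rows/columns preserves lower triangularity and keeps $1$'s on the diagonal), so $\det(\bar{F}_{\alpha\alpha}) = 1 > 0$. By the definition of P-matrix, $\bar{F}$ qualifies, and the standard LCP result invoked earlier in the excerpt gives that $\mathrm{SOL}(\bar{E}x + \bar{c},\bar{F})$ is a singleton for every $x$.

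The hard part is really just the bookkeeping of indices when formally identifying the stacked complementarity system with the block matrices $\bar{E}$, $\bar{F}$, $\bar{c}$, $\bar{D}$, $\bar{z}$ written in the lemma; the mathematical content reduces to (i) a layerwise application of Lemma \ref{lemma:LCP_max_eqivalency} and (ii) the elementary determinant argument for the P-matrix property. A short induction on $L$ could replace the explicit stacking argument if a cleaner write-up is preferred, using the two-layer example as the base case and appending one $\max$ layer per inductive step, which preserves both the block lower triangular pattern and the unit diagonal.
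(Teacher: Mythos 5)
Your proposal is correct and follows essentially the same route as the paper's proof: a layerwise application of Lemma \ref{lemma:LCP_max_eqivalency} to identify the stacked complementarity conditions with the block matrices $\bar{E},\bar{F},\bar{c},\bar{D},\bar{z}$, plus the observation that every principal submatrix of $\bar{F}$ is lower triangular with unit diagonal, so $\bar{F}$ is a P-matrix and $\lambda(x)$ is unique. The only cosmetic difference is your index convention ($\lambda_1,\dots,\lambda_L$ versus the paper's $\lambda_0,\dots,\lambda_{L-1}$) and that you justify uniqueness via the P-matrix property while the paper reads the solution off recursively from the triangular structure; both are sound.
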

\begin{proof}
	First, we write $\lambda^\top = \begin{bmatrix}
		\lambda_0^\top &
		\lambda_1^\top &
		\cdots & 
		\lambda_{L-1}^\top
	\end{bmatrix} \in \mathbb{R}^{1 \times n_t}$ where each sub vector $\lambda_i$ has the same dimension as $\bar{c}_i$. 
	Next, we show that $\lambda_0(x) = \lambda_\text{ReLU} \circ h_0(x)$. 
	Observe that $\lambda_0$ is independent of $\lambda_1, \ldots, \lambda_{L-1}$ since $F$ is lower-triangular. 
	Hence $\lambda_0$ is the unique element of the following LCP:
	\begin{alignat}{2}
		\notag & \underset{}{\text{find}} && \lambda_0\\
		\notag & \text{subject to}  \quad && \bar{y}_0 = -\theta_0 x - \bar{c}_0 + \lambda_0,\\
		\notag & \quad && 0 \leq \lambda_0 \perp \bar y_0 \geq 0.
	\end{alignat}
	Following Lemma \ref{lemma:LCP_max_eqivalency}, $\lambda_0(x) = \max \{ 0,h_0(x) \} = \lambda_\text{ReLU} \circ h_0(x)$. Similarly, notice that for $i > 0$, $\lambda_i$ only depends on $\lambda_{i-1}(x)$ and is the unique element of:
	\begin{alignat}{2}
		\notag & \underset{}{\text{find}} && \lambda_i\\
		\notag & \text{subject to}  \quad && \bar{y}_i = -\theta_i \lambda_{i-1}(x) - \bar{c}_i + \lambda_i,\\
		\notag & \quad && 0 \leq \lambda_i \perp \bar{y} \geq 0,
	\end{alignat}
	and is equivalent to $\lambda_i(x) = \lambda_\text{ReLU} \circ h_i \circ \lambda_{i-1} (x)$ as a direct application of Lemma \ref{lemma:LCP_max_eqivalency}. 
	Using this equivalency recursively,
	\begin{equation*}
		\bar{D} \lambda(x) + \bar{z} = h_L \circ \lambda_\text{ReLU} \circ h_{L-1} \circ \ldots \circ \lambda_\text{ReLU} \circ h_0(x).
	\end{equation*}
	Notice that $\bar{F}_{\alpha \alpha}$ is lower triangular with ones on the diagonal for any $\alpha$ such that $\text{card}(\alpha) \geq 2$ hence $\bar{F}$ is a P-matrix.
\end{proof}
Each neuron in the NN is represented with a complementarity variable, therefore the dimension of the complementarity vector ($\lambda$) is equal to the number of neurons in the network. As seen in Lemma \ref{LCP_ReLU_equivalency}, transforming a ReLU neural network into an LCP only requires concatenating vectors and matrices.

\subsection{Linear Complementarity Systems with Neural Network Controllers}
\label{sub:complementaritysys_nncont}

We will use the LCP representation of the neural network \eqref{eq:NN_ReLU} and describe an LCS with a NN controller as an A-LCS. Consider a linear complementarity system with a ReLU neural network controller $u_k = \phi(x_k)$:
\begin{equation}
	\begin{aligned}
		\label{eq:LCS_NN_cont}
		& x_{k+1} = A x_k + B \phi(x_k) + \tilde{D} \tilde{\lambda}_k + \tilde{z},\\
		& \tilde{y}_k = \tilde{E} x_k +  \tilde{F} \tilde{\lambda}_k + H \phi(x_k) + \tilde{c}, \\
		& 0 \leq \tilde{\lambda}_k \perp \tilde{y}_k \geq 0,
	\end{aligned}
\end{equation}
where $x_k \in \mathbb{R}^{n_x}$ is the state, $\tilde{\lambda}_k \in \mathbb{R}^{n_{\tilde \lambda}}$ is the complementarity variable, $\phi(x) \in \mathbb{R}^{n_\phi}$ is a ReLU neural network as in \eqref{eq:NN_ReLU} with $n_t$ neurons. Notice that \eqref{eq:LCS_NN_cont} is not in the A-LCS form. Using Lemma~\ref{LCP_ReLU_equivalency}, we can can transform~\eqref{eq:LCS_NN_cont} into an A-LCS in a higher dimensional space. To see this, 
%
%
%Here, we represent $\phi$ as an LCP using Theorem \ref{LCP_ReLU_equivalency}, and transform \eqref{eq:LCS_NN_cont} into an A-LCS with a higher dimensional complementarity variable.
observe that \eqref{eq:LCS_NN_cont} is equivalent to
\begin{equation*}
	\begin{aligned}
		& x_{k+1} = A x_k + B (\bar{D} \bar{\lambda}_k+ \bar{z}) +\tilde{D} \tilde{\lambda}_k + \tilde{z},\\
		& \tilde{y}_k = \tilde{E} x_k +  \tilde{F} \tilde{\lambda}_k + H (\bar{D} \bar{\lambda}_k + \bar{z} )+ \tilde{c}, \\
		& \bar{y}_k = \bar{E} x_k + \bar{F} \bar{\lambda}_k + \bar{c}, \\
		& 0 \leq \tilde{\lambda}_k \perp \tilde{y}_k \geq 0, \\
		& 0 \leq \bar{\lambda}_k \perp \bar{y}_k \geq 0,
	\end{aligned}
\end{equation*}
after direct application of Lemma~\ref{LCP_ReLU_equivalency} where $\bar{\lambda} \in \mathbb{R}^{n_t}$.
We can write it succinctly as
\begin{equation}
	\label{eq:LCS_Controller}
	\begin{aligned}
		& x_{k+1} = A x_k + D \lambda_k + z, \\
		& y_k = E x_k + F \lambda_k + c, \\
		& 0 \leq \lambda_k \perp y_k \geq 0,
	\end{aligned}
\end{equation}
where $\lambda_k = \begin{bmatrix} \tilde{\lambda}_k \\ \bar{\lambda}_k \end{bmatrix}$, $y_k = \begin{bmatrix} \tilde{y}_k \\ \bar{y}_k \end{bmatrix}$, $D = \begin{bmatrix} \tilde{D} & B \bar{D} \end{bmatrix}$, $E = \begin{bmatrix} \tilde{E} \\ \bar{E} \end{bmatrix}$, $F = \begin{bmatrix}
	\tilde{F} & H \bar{D} \\ 0 & \bar{F} \end{bmatrix}$, $c = \begin{bmatrix} \tilde{c} + H \bar{z} \\ \bar{c} \end{bmatrix}$, and $z = B \bar{z} + \tilde{z}$. Here, the size of $x_k \in \mathbb{R}^{n_x}$ does not change, but notice that now $\lambda_k \in \mathbb{R}^{n_{\lambda}}$ where $n_{\lambda} = n_t + n_{ \tilde{\lambda} }$.
Using controllers of the form $\eqref{eq:NN_ReLU}$, we will exclusively consider the linear complementarity system model \eqref{eq:LCS_Controller} for notational compactness.
\iffalse
As a special case, consider that $\tilde{F}$ is a P-matrix therefore \eqref{eq:LCS_NN_cont} reduces to a piecewise-affine system. Then combining such a system with a ReLU neural network controller can be easily achieved by increasing the number of complementarity variables to obtain \eqref{eq:LCS_Controller} as described in this subsection. On the other hand, consider the piece-wise affine representation of the same system:
\begin{equation}
	\label{eq:pwa_sys}
	x_{k+1} = A_i x_k + a_i + B u_k, \; x_k \in \mathcal{R}_i,
\end{equation}
where $\mathcal{R}_i$ are polyhedral partitions of the state space and $u_k = \phi(x_k)$ is a ReLU neural network controller. There exists an equivalent representation of \eqref{eq:pwa_sys} as
\begin{equation*}
	x_{k+1} = \tilde{A}_i x_k + \tilde{a}_i, \; x_k \in \tilde{\mathcal{R}}_i,
\end{equation*}
since both the autonomous part of the system and the ReLU neural network controller are piece-wise affine functions over polyhedral partitions but computing this representation is non-trivial unlike the complementarity approach. 
\fi

Similarly, one can consider the scenario where both the system dynamics and the controller are represented by ReLU neural networks as in $x_{k+1} = \phi_1(x_{k}) + B \phi_2(x_k)$, where $\phi_1$ represents the autonomous part of the dynamics (obtained by, for example, system identification) and $\phi_2$ is the controller. Using Lemma~\ref{LCP_ReLU_equivalency}, this system has an equivalent A-LCS representation similar to \eqref{eq:LCS_Controller}, but the details are omitted for brevity.

After obtaining an A-LCS representation of the closed-loop system, one can directly use the existing tools for stability analysis of complementarity systems, such as Lyapunov functions \cite{camlibel2007lyapunov} and semidefinite programming \cite{aydinoglu2020contact}. We will elaborate on this in the next section.

\section{Stability Analysis of the Closed-Loop System}

In this section, we provide sufficient conditions for stability in the sense of Lyapunov for an A-LCS.
Then, we show that the stability verification problem is equivalent to finding a feasible solution to a set of linear matrix inequalities (LMI's). To begin, consider the following Lyapunov function candidate that was introduced in \cite{camlibel2007lyapunov}:
\begin{equation}
	\label{eq:lyapunov_function}
	\begin{aligned}
		V(x_k, \lambda_k) %&= x_k^T P x_k + 2 x_k^T Q \lambda_k + \lambda_k^T R \lambda_k + 2 h_1^T x + 2 h_2^T \lambda + h_3, \\
		&= \begin{bmatrix}
			x_k \\ \lambda_k \\ 1
		\end{bmatrix}^\top \underbrace{\begin{bmatrix}
				P & Q & h_1 \\ Q^\top  & R & h_2 \\ h_1^T & h_2^T & h_3
		\end{bmatrix}}_{:=M}\begin{bmatrix}
			x_k \\ \lambda_k \\ 1
		\end{bmatrix},
	\end{aligned}
\end{equation}
where $P \in \mathbb{S}^{n_x}$, $Q \in \mathbb{R}^{n_x \times n_\lambda}$, $R \in \mathbb{S}^{n_\lambda}$, $h_1 \in \mathbb{R}^{n_x}$, $h_2 \in \mathbb{R}^{n_\lambda}$, and $h_3 \in \mathbb{R}$ are to be chosen.
Note that if $F$ in \eqref{eq:LCS_Controller} is a P-matrix, then $\lambda_k$ is a piecewise affine function of $x_k$, implying that the Lyapunov function \eqref{eq:lyapunov_function} is quadratic in the pair $(x_k, \lambda_k)$ but  it is \emph{piecewise} quadratic (PWQ) in the state $x_k$. If $F$ is not a P-matrix, then $V$ can be set valued since there can be multiple $\lambda_k$'s corresponding to each $x_k$. In either case, $V$ reduces to a common quadratic Lyapunov function in the special case $Q=R=0$. Therefore, \eqref{eq:lyapunov_function} is more expressive than a common quadratic Lyapunov function.

%In the following theorem, we present our main result, in which we derive a set of LMIs in the Lyapunov functions parameters that are sufficient to guarantee the stability of~\eqref{eq:LCS_Controller}.
In the following theorem, we construct sufficient conditions for the stability of~\eqref{eq:LCS_Controller}, using the Lyapunov function \eqref{eq:lyapunov_function}. This is the discrete time version of the results in \cite{camlibel2007lyapunov}.
\begin{theorem}
	\label{stability_theorem}
	Consider the A-LCS in \eqref{eq:LCS_Controller} with the equilibrium $x_e = 0$, the Lyapunov function \eqref{eq:lyapunov_function} and a domain $\mathcal{X} \subseteq \mathbb{R}^n$. 
	If there exist $M \in \mathbb{S}^{n_x+n_\lambda+1}$, $\alpha_1 > 0$, and $\alpha_2> \alpha_3 \geq 0$ %and $\alpha_3 \geq 0$ 
	such that %for all $x_k \in \mathcal{X}$, the following holds:
	\begin{align*}
		& \alpha_1 ||x_k||_2^2	\leq V(x_k, \lambda_k) \leq \alpha_2 ||x_k||_2^2, \; \forall (x_k, \lambda_k) \in \Gamma_1, \\
		& V(x_{k+1}, \lambda_{k+1} ) - V(x_k, \lambda_k) \leq -\alpha_3 ||x_k||_2^2, \; \forall (x_k, \lambda_k, \lambda_{k+1}) \in \Gamma_2,
	\end{align*}
	where $x_{k+1} = A x_k + D \lambda_k + z$ and
	\begin{align*}
		&\Gamma_1 = \{ (x_k, \lambda_k) : \; 0 \leq \lambda_k \perp E x_k + F \lambda_k + c \geq 0, \ x_k \in \mathcal{X} \}, \\
		%&\Gamma_1 = \{ (x_k, \lambda_k) \colon \lambda_k \in SOL(Ex_k+c,F), \ x_k \in \mathcal{X}\}, \\% : , \; 0 \leq \lambda_k \perp E x_k + F \lambda_k + c \geq 0 \}, \\
		%&\Gamma_2 = \{(x_k, \lambda_k, \lambda_{k+1}) \colon \lambda_k \! \in \! SOL(Ex_k+c,F),  \lambda_{k+1} \! \in \! SOL(E x_{k+1}+c,F), \ x_k \in \mathcal{X}\}.
		&\Gamma_2 = \{ (x_k, \lambda_k, \lambda_{k+1}) : 0 \leq \lambda_k \perp E x_k + F \lambda_k + c \geq 0,\\
		& \qquad \qquad \quad 0 \leq \lambda_{k+1} \perp  E x_{k+1} + F \lambda_{k+1} + c \geq 0, \ x_k \in \mathcal{X}    \}.
	\end{align*}
	Then the equilibrium is Lyapunov stable if $\alpha_3 = 0$ and geometrically stable if $\alpha_2 > \alpha_3 > 0$.
\end{theorem}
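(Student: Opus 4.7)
The plan is a standard discrete-time Lyapunov argument, with the only subtleties being (i) that trajectories are generally non-unique (since $F$ need not be a $P$-matrix, the LCP may admit several $\lambda_k$), and (ii) that the two inequalities in the hypothesis are only assumed on $\mathcal{X}$, so I must first confine the trajectory to $\mathcal{X}$ before iterating. I would therefore shrink the initial ball enough that every one-step successor remains in $\mathcal{X}$, then iterate the Lyapunov decrease along an \emph{arbitrary} admissible trajectory $(x_k,\lambda_k)$.

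Concretely, I first pick $\epsilon>0$ small enough that the closed ball $\bar B_\epsilon(0)$ is contained in $\mathcal{X}$, and define $\delta := \epsilon\sqrt{\alpha_1/\alpha_2}$. For the \textbf{stable case} ($\alpha_3=0$), I take any admissible pair of sequences $(x_k,\lambda_k)_{k\geq 0}$ with $\|x_0\|\leq \delta$. The hypothesis on $\Gamma_2$ gives $V(x_{k+1},\lambda_{k+1})\leq V(x_k,\lambda_k)$ as long as $x_k\in\mathcal{X}$, and the hypothesis on $\Gamma_1$ sandwiches $V$ between $\alpha_1\|\cdot\|_2^2$ and $\alpha_2\|\cdot\|_2^2$ on $\mathcal{X}$. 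A straightforward induction shows $x_k\in\mathcal{X}$ and
\begin{equation*}
\alpha_1\|x_k\|_2^2 \;\leq\; V(x_k,\lambda_k) \;\leq\; V(x_0,\lambda_0) \;\leq\; \alpha_2\|x_0\|_2^2 \;\leq\; \alpha_2\delta^2 \;=\; \alpha_1\epsilon^2,
\end{equation*}
so $\|x_k\|\leq \epsilon$ for all $k\geq 0$, which is exactly Lyapunov stability.

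For the \textbf{geometric case} ($\alpha_2>\alpha_3>0$), I use the same $\delta$ and the sharper decrease $V(x_{k+1},\lambda_{k+1})-V(x_k,\lambda_k)\leq -\alpha_3\|x_k\|_2^2 \leq -(\alpha_3/\alpha_2)V(x_k,\lambda_k)$ on $\Gamma_2$. Setting $\rho := \sqrt{1-\alpha_3/\alpha_2}\in(0,1)$, induction gives $V(x_k,\lambda_k)\leq \rho^{2k} V(x_0,\lambda_0)$ and in particular $x_k$ remains in $\bar B_\epsilon(0)\subseteq\mathcal{X}$ so the step is legal at every $k$. Combining with the quadratic sandwich,
\begin{equation*}
\|x_k\|_2^2 \;\leq\; \tfrac{1}{\alpha_1}V(x_k,\lambda_k) \;\leq\; \tfrac{\alpha_2}{\alpha_1}\,\rho^{2k}\,\|x_0\|_2^2,
\end{equation*}
so $\|x_k\|\leq \alpha\rho^k\|x_0\|$ with $\alpha := \sqrt{\alpha_2/\alpha_1}>1$ (using $\alpha_2>\alpha_1$, which follows from the sandwich inequality applied at any nonzero point).

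The main obstacle, and the reason the argument needs a touch of care beyond the classical smooth case, is the non-uniqueness of trajectories: the condition on $\Gamma_2$ must be read as ``for \emph{every} admissible successor $\lambda_{k+1}$'', and I must re-apply it along whatever trajectory I have chosen, which is why the induction is phrased on an arbitrary admissible $(x_k,\lambda_k)$ rather than on a distinguished solution. Keeping the trajectory inside $\mathcal{X}$ is handled automatically by the monotone decrease of $V$ and by choosing $\epsilon$ so that $\bar B_\epsilon(0)\subseteq\mathcal{X}$; no separate invariance argument is required.
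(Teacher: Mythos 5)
Your core iteration is exactly the paper's argument: use the upper bound to convert the decrease condition into $V(x_{k+1},\lambda_{k+1})\leq (1-\alpha_3/\alpha_2)\,V(x_k,\lambda_k)$, iterate, and then use the lower bound to get $\|x_k\|\leq\sqrt{\alpha_2/\alpha_1}\,\rho^k\|x_0\|$ with $\rho=\sqrt{1-\alpha_3/\alpha_2}$ (the paper writes $\gamma=\rho^2$); quantifying over arbitrary admissible $(\lambda_k)$ sequences is also how the paper implicitly reads the hypotheses, so that part is the same route and is correct when $\mathcal{X}=\mathbb{R}^n$, which is the setting the paper actually uses in its proof and in the subsequent LMI proposition.

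The genuine gap is in your claim that confinement to a bounded $\mathcal{X}$ is ``handled automatically by the monotone decrease of $V$,'' with ``no separate invariance argument required.'' The induction you call straightforward is circular: to conclude $\|x_{k+1}\|\leq\epsilon$ from the smallness of $V(x_{k+1},\lambda_{k+1})$ you need the lower bound $\alpha_1\|x_{k+1}\|_2^2\leq V(x_{k+1},\lambda_{k+1})$, but that bound is only hypothesized on $\Gamma_1$, i.e.\ only when $x_{k+1}\in\mathcal{X}$ --- which is precisely what you are trying to establish. In discrete time the state can leave $\mathcal{X}$ in a single step, and since $M$ is not required to be positive semidefinite, $V$ gives no control on $\|x_{k+1}\|$ outside $\mathcal{X}$. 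The fix you sketch in your opening (shrink the initial ball so that every one-step successor stays in $\mathcal{X}$) is never carried out, and it cannot be carried out from the stated hypotheses alone: without $F$ being a P-matrix the admissible $\lambda_k$ at a given $x_k$ form a possibly unbounded set, so $x_{k+1}=Ax_k+D\lambda_k+z$ need not be small when $x_k$ is. The paper avoids the issue by proving the theorem globally and treating restricted domains separately (the ROA remark enforces invariance of the sublevel set $\{V\leq\xi\}$ directly through extra S-procedure terms). So either restrict your proof to $\mathcal{X}=\mathbb{R}^n$, or add an explicit hypothesis/argument (e.g.\ a sublevel set of $V$ contained in $\mathcal{X}$ on which the conditions hold, plus one-step invariance of that set) before iterating.
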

\begin{proof}
	Observe that for all $(x_k, \lambda_k)$:
	\begin{equation*}
		\alpha_1 ||x_k||_2^2 \leq V(x_k, \lambda_k) \leq V(x_0, \lambda_0) \leq \alpha_2 ||x_0||_2^2,
	\end{equation*}
	and Lyapunov stability follows. For geometric stability, notice that Lyapunov decrease condition is equivalent to $V(x_{k+1}, \lambda_{k+1}) - \gamma V(x_k, \lambda_k) \leq 0$, 
	%\begin{equation*}
		%V(x_{k+1}, \lambda_{k+1}) - \gamma V(x_k, \lambda_k) \leq 0,
	%\end{equation*}
	for some $\gamma \in (0,1)$. Then
	\begin{equation*}
		\alpha_1 ||x_k||_2^2 \leq V(x_k, \lambda_k) \leq \gamma^k V(x_0, \lambda_0) \leq \alpha_2 \gamma^k ||x_0||_2^2.
	\end{equation*}
	The result follows.
\end{proof}

Note that we do not require $M$ in \eqref{eq:lyapunov_function} to be positive definite to satisfy the requirements of Theorem \ref{stability_theorem}.
In light of this theorem, we must solve the following feasibility problem to verify that if the equilibrium of the  closed-loop system \eqref{eq:LCS_Controller} is stable on $\mathcal{X}$:
\begin{alignat}{2}
	\label{eq:feasability_LCS}
	& \underset{}{\text{find}} && P, Q, R, h_1, h_2, h_3, \alpha_1, \alpha_2, \alpha_3 \\
	\notag& \text{s.t.}  \quad && \alpha_1 ||x_k||_2^2	\leq V(x_k, \lambda_k) \leq \alpha_2 ||x_k||_2^2, \; \text{for} \; (x_k, \lambda_k) \in \Gamma_1,\\
	\notag& && \Delta V \leq -\alpha_3 ||x_k||_2^2, \; \text{for} \; (x_k, \lambda_k, \lambda_{k+1}) \in \Gamma_2,
\end{alignat}
where $\Delta V = V(x_{k+1}, \lambda_{k+1}) - V(x_k, \lambda_k)$.
In the following proposition, we turn \eqref{eq:feasability_LCS} with $\mathcal{X} = \mathbb{R}^n$ into an LMI feasibility problem using the S-procedure \cite{boyd1994linear}.

\begin{proposition} The following LMI's solve \eqref{eq:feasability_LCS} with $\mathcal{X} = \mathbb{R}^n$:
	\begin{subequations} \label{thm: LMIs}
		\begin{align}
			& T_1 - S_1^T W_1 S_1 - \frac{1}{2}(S_{3,1} + S_{3,1}^\top) \succeq 0, \label{thm: LMIs 1}\\
			& T_2 + S_1^\top W_2 S_1 + \frac{1}{2}(S_{3,2} + S_{3,2}^\top) \preceq 0, \label{thm: LMIs 2}\\
			& T_3 + S_2^T W_3 S_2 + S_5^T W_4 S_5 + \frac{1}{2} [ (S_4 + S_4^\top) + (S_6 + S_6^\top) ]  \succeq 0, \label{thm: LMIs 3}
		\end{align}
	\end{subequations}
	where $G_1 = D^T P z + D^T h_1 - h_2$, $G_2 = z^T P D + h_1^T D - h_2$, $G_3 = z^T Q + h_2^T$, $S_1 = \begin{bmatrix}
		E & F & c \\ 0 & I & 0 \\ 0 & 0 & 1
	\end{bmatrix}$, $S_2 = \begin{bmatrix}
		E & F & 0 & c \\ 0 & I & 0 & 0 \\ 0 & 0 & 0 & 1
	\end{bmatrix}$, $S_{3,i} = \begin{bmatrix}
		0 & 0 & 0 \\ J_i E & J_i F &  J_i c \\ 0 & 0 & 0
	\end{bmatrix}$, 
	$\quad S_4 = \begin{bmatrix}
		0 & 0 & 0 & 0 \\ J_3 E & J_3 F & 0 & J_3 c \\ 0 & 0 & 0 & 0
	\end{bmatrix}$, $S_5 = \begin{bmatrix}
		E A & E D & Fc & Ec z + c \\ 0 & 0 & I & 0 \\ 0 & 0 & 0 & 1
	\end{bmatrix}$, 
	
	$S_6 = \begin{bmatrix}
		0 & 0 & 0 & 0 \\ J_4 E A & J_4 E D & J_4 F & J_4 E z + c \\ 0 & 0 & 0 & 0
	\end{bmatrix}$, $T_1 = \begin{bmatrix}
		P - \alpha_1 I & Q & h_1 \\ Q^\top  & R & h_2 \\ h_1^T & h_2^T & h_3
	\end{bmatrix}$, 
	$ T_2 = \begin{bmatrix}
		P - \alpha_2 I & Q & h_1 \\ Q^\top  & R & h_2 \\ h_1^T & h_2^T & h_3
	\end{bmatrix}$,
	\begin{equation*}
		T_3 = - \begin{bmatrix}
			A^T P A - P + \alpha_3 I & A^T P D - Q & A^T Q & A^T P z - h_1 \\
			D^T P A - Q^T & D^T P D - R & D^T Q & G_1 \\
			Q^T A & Q^T D & R & Q^T z + h_2 \\
			z^T P A - h_1^T & G_2 & G_3 & z^T P z - h_1^T z
		\end{bmatrix}.
	\end{equation*}
	Here, $W_i$ are decision variables with non-negative entries, and $ J_i= \operatorname{diag}(\tau_i)$ where $\tau_i \in \mathbb{R}^m$ are free decision variables. %Notice that first two LMI's are of size $(n_x+m_e+1)$ and the last one is of size $(n_x+2m_e+1)$.
\end{proposition}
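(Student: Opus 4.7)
The plan is a routine S-procedure argument tailored to LCP constraints, using two kinds of multipliers: a copositive multiplier for the componentwise inequalities $\lambda_k \geq 0,\ y_k \geq 0$ and a sign-free diagonal multiplier for the complementarity equality $\lambda_k^\top y_k = 0$. I would introduce the augmented vectors $\zeta_1 = (x_k^\top,\lambda_k^\top,1)^\top$ and $\zeta_2 = (x_k^\top,\lambda_k^\top,\lambda_{k+1}^\top,1)^\top$ and first verify the three identifications $V(x_k,\lambda_k) - \alpha_1 \|x_k\|_2^2 = \zeta_1^\top T_1 \zeta_1$, $V(x_k,\lambda_k) - \alpha_2 \|x_k\|_2^2 = \zeta_1^\top T_2 \zeta_1$, and $V(x_k,\lambda_k) - V(x_{k+1},\lambda_{k+1}) - \alpha_3 \|x_k\|_2^2 = \zeta_2^\top T_3 \zeta_2$. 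The first two are immediate from the definition of $M$, and the third follows by substituting $x_{k+1} = Ax_k + D\lambda_k + z$ into the quadratic form and collecting block by block in the $(x_k,\lambda_k,\lambda_{k+1},1)$ basis.

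Next, I would encode the complementarity sets as quadratic side constraints. By construction $S_1\zeta_1 = (y_k^\top,\lambda_k^\top,1)^\top$, which is componentwise nonnegative on $\Gamma_1$. Since any outer product $vv^\top$ of a nonnegative vector $v$ has nonnegative entries, for any entrywise-nonnegative matrix $W_1$ the inequality $\zeta_1^\top S_1^\top W_1 S_1 \zeta_1 = v^\top W_1 v \geq 0$ holds on $\Gamma_1$, supplying the copositive multiplier. For the equality $\lambda_k^\top y_k = 0$ I would use a diagonal $J_1 = \operatorname{diag}(\tau_1)$ with unconstrained $\tau_1 \in \mathbb{R}^{n_\lambda}$; rewriting $\lambda_k^\top J_1 (E x_k + F\lambda_k + c) = 0$ in matrix form and symmetrizing yields $\tfrac{1}{2}\zeta_1^\top(S_{3,1}+S_{3,1}^\top)\zeta_1 = 0$ on $\Gamma_1$. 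The same construction, applied separately at times $k$ and $k+1$, encodes $\Gamma_2$ through the pairs $(S_2, J_3)$ and $(S_5, J_4)$, where $S_5$ and $S_6$ are obtained by substituting $x_{k+1} = Ax_k + D\lambda_k + z$ into $y_{k+1}$ and into $\lambda_{k+1}^\top J_4 y_{k+1}$, respectively.

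With these pieces in place, the three LMIs follow directly from the S-procedure. For instance, if \eqref{thm: LMIs 1} holds, then for every $(x_k,\lambda_k) \in \Gamma_1$,
\begin{equation*}
\zeta_1^\top T_1 \zeta_1 \;\geq\; \zeta_1^\top S_1^\top W_1 S_1 \zeta_1 + \tfrac{1}{2}\zeta_1^\top(S_{3,1}+S_{3,1}^\top)\zeta_1 \;\geq\; 0,
\end{equation*}
which is exactly $V(x_k,\lambda_k) \geq \alpha_1 \|x_k\|_2^2$; the upper bound follows from \eqref{thm: LMIs 2} with the opposite sign convention, and the Lyapunov decrease follows from \eqref{thm: LMIs 3} using the two $\Gamma_2$-multiplier pairs $(W_3, J_3)$ and $(W_4, J_4)$. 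The main obstacle is purely combinatorial rather than conceptual: one has to carry out the block-by-block matching that produces $T_3$ after the dynamics substitution, and verify that $S_5, S_6$ correctly express the step-$(k{+}1)$ complementarity once $x_{k+1}$ is eliminated from $\zeta_2$. Beyond that bookkeeping, the argument is the standard \emph{copositive plus equality} S-procedure relaxation for LCP-constrained quadratic inequalities.
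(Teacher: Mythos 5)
Your proposal is correct and follows essentially the same route as the paper's own proof: multiply each LMI on both sides by the augmented vectors $(x_k^\top,\lambda_k^\top,1)^\top$ and $(x_k^\top,\lambda_k^\top,\lambda_{k+1}^\top,1)^\top$, use the entrywise-nonnegative $W_i$ against the componentwise nonnegativity of $(y_k^\top,\lambda_k^\top,1)^\top$, and use the free diagonal $J_i$ multipliers against the complementarity equalities $\lambda_k^\top y_k = \lambda_{k+1}^\top y_{k+1} = 0$, with $T_3$, $S_5$, $S_6$ obtained by eliminating $x_{k+1}$ via the dynamics. The only caveat is one you share with the paper: for \eqref{thm: LMIs 3} the multiplier terms must effectively enter with the sign that makes the S-procedure bound $-\Delta V - \alpha_3\|x_k\|_2^2$ from below by the nonnegative constraint terms (and $S_5$ should carry the blocks $F$ and $Ez+c$), which is how both you and the paper's proof implicitly read the condition.
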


\begin{proof}
	First define $e_k^\top = \begin{bmatrix} x_k^\top  & \lambda_k^\top & 1 \end{bmatrix}$. By left and right multiplying both sides of \eqref{thm: LMIs 1} by $e_k^\top$ and $e_k$, respectively, we obtain
	\begin{align*}
		V(x_k,\lambda_k) - \alpha_1 \|x_k\|_2^2 \geq \begin{bmatrix}
			y_k \\ \lambda_k \\ 1
		\end{bmatrix}^\top W_1 \begin{bmatrix}
			y_k \\ \lambda_k \\ 1 \end{bmatrix} \!+ \! 2 \lambda_k^\top \operatorname{diag}(\tau_1) y_k
	\end{align*}
	The right hand side is non-negative due to the complementarity constraint $0 \leq \lambda_k \perp y_k \geq 0$. Similarly, by left and right multiplying both sides of \eqref{thm: LMIs 2} by $e_k^\top$ and $e_k$, respectively, we obtain
	\begin{align*}
		\alpha_2 \|x_k\|_2^2 - V(x_k,\lambda_k) \geq \begin{bmatrix}
			y_k \\ \lambda_k \\ 1
		\end{bmatrix}^\top W_2 \begin{bmatrix}
			y_k \\ \lambda_k \\ 1 \end{bmatrix} \!+ \! 2 \lambda_k^\top \operatorname{diag}(\tau_2) y_k
	\end{align*}
	Again, the right hand side is non-negative due to the complementarity constraint $0 \leq \lambda_k \perp y_k \geq 0$. 
	
	Now, we define $p_k^\top = \begin{bmatrix} x_k^\top  & \lambda_k^\top & \lambda_{k+1}^\top & 1 \end{bmatrix}$. Notice that if we left and right multiply both sides of \eqref{thm: LMIs 3} by $p_k^\top$ and $p_k$, we obtain
	\begin{align*}
		-\Delta V - \alpha_3 ||x_k||_2^2 \geq & \begin{bmatrix}
			y_k \\ \lambda_k \\ 1 \end{bmatrix}^\top W_3 \begin{bmatrix}
			y_k \\ \lambda_k \\ 1 \end{bmatrix} + \begin{bmatrix}
			y_{k+1} \\ \lambda_{k+1} \\ 1 \end{bmatrix}^\top W_4 \begin{bmatrix}
			y_k \\ \lambda_k \\ 1 \end{bmatrix} \\ 
		& + \! 2 \lambda_k^\top \operatorname{diag}(\tau_3) y_k + \! 2 \lambda_{k+1}^\top \operatorname{diag}(\tau_4) y_{k+1}
	\end{align*}
	Similarly, all the terms on the right hand side are non-negative since $0 \leq \lambda_k \perp y_k \geq 0$ for all $k$. This concludes the proof.
\end{proof}

Notice that \eqref{eq:feasability_LCS} captures the non-smooth structure of the LCS combined with the ReLU neural network controller. In addition to that, we can assign a different quadratic function to each polyhedral partition that is created by the neural network without enumerating those partitions by exploiting the complementarity structure of the neural network. Observe that \eqref{thm: LMIs 1}, \eqref{thm: LMIs 2} are LMI's of size $(n_x+n_\lambda+1)$, and \eqref{thm: LMIs 3} is an LMI of size $(n_x+2n_\lambda+1)$. 

Note that Theorem \ref{thm: LMIs} is a global result for $\mathcal{X} = \mathbb{R}^n$. We can adapt the theorem to bounded regions $\mathcal{X}$ containing the origin.

%\begin{remark}
%	In Theorem \ref{stability_theorem} suppose the set $\mathcal{X}$ can be described or over approximated by an intersection of quadratic inequalities of the form
%	%
%	\begin{align}
%		\mathcal{X} \subseteq \bigcap_{i=1}^{m} \left\{x \in \mathbb{R}^n \mid \begin{bmatrix}
%			x \\ 1
%		\end{bmatrix}^\top M_i\begin{bmatrix}
%		x \\ 1
%	\end{bmatrix} \geq 0 \right\}.
%	\end{align}
%	% 
%	Then by a change of basis
%	\end{remark}

%Next, we show that these global results ($\mathcal{X} = \mathbb{R}^n$) can also be adapted to local analysis and region of attraction (ROA) computation as in \cite{posa2015stability}.

\begin{remark}
	For the equilibrium $x_e=0$, the region of attraction is defined as
	\begin{equation*}
		\mathcal{R} = \{ x_0 : \lim_{k \rightarrow \infty} ||x_k|| = 0 \} .
	\end{equation*}
	If one adds (to the left side) $-\eta_1 L_1$ to \eqref{thm: LMIs 1}, $+\eta_2 L_1$ to \eqref{thm: LMIs 2} and $+\eta_3 L_2$ to \eqref{thm: LMIs 3} where
	\begin{align*}
		L_1 = \begin{bmatrix}
			-P & -Q & -h_1 \\ -Q^\top & -R & -h_2 \\ -h_1^\top & -h_2^\top & \xi -h_3
		\end{bmatrix}, L_2 =  \begin{bmatrix}
			-P & -Q & 0 & h_1\\ -Q^T & R & 0 & -h_2\\ 0 & 0 & 0 & 0 \\ -h_1^T & -h_2^T & 0 & \xi - h_3
		\end{bmatrix},
	\end{align*}
	and $\eta_i$ are non-negative scalar variables, then the closed-loop system is geometrically stable for $\alpha_2>\alpha_3>0$ and the sub-level set
	\begin{equation*}
		\mathcal{V}_{\xi} = \{x : V(x,\lambda) \leq \xi \ \forall (x,\lambda) \in \Gamma_1 \},
	\end{equation*}
	is an approximation of the ROA, i.e., $\mathcal{V}_\xi \subseteq \mathcal{R}$. To see this, note that the resulting matrix inequality would imply
	\begin{align*}
		&\alpha_1 \|x_k\|_2^2 + \eta_1 (\xi- V(x_k, \lambda_k)) )\leq V(x_k,\lambda_k) \leq \eta_2 (V(x_k, \lambda_k)-\xi) + \alpha_2 \|x_k\|_2^2 \\
		&V(x_{k+1}, \lambda_{k+1}) - V(x_k, \lambda_k) + \alpha_3 \|x_k\|_2^2+ \eta_3 (\xi- V(x_k, \lambda_k) ) \leq 0.
	\end{align*}
	From the first inequality, if $ V(x_k, \lambda_k)  \leq \xi$, then $	\alpha_1 \|x_k\|_2^2 \leq V(x_k,\lambda_k) \leq \alpha_2 \|x_k \|_2^2$. 
	From the second inequality, for some $\gamma \in (0,1)$ we have $	V(x_{k+1},\lambda_{k+1}) \leq \gamma V(x_k,\lambda_k) \leq \xi$. By induction, if $V(x_0,\gamma_0) \leq \xi$, then $\alpha_1 \|x_k\|_2^2 \leq V(x_k,\gamma_k) \leq \gamma^k V(x_0,\gamma_0) \leq \gamma^k \alpha_2 \|x_0\|_2^2$.
	%then the proof of Theorem \ref{stability_theorem} immediately applies.
	%which then implies that if $ V(x_0, \lambda_0)  \leq \xi$, then $V(x_{k+1}, \lambda_{k+1}) \leq \gamma V(x_k,\lambda_k) \leq V(x_0,\lambda_0) \leq \xi$.
\end{remark}

\begin{remark}
	In order to prove the Lyapunov conditions over the ellipsoid $\mathcal{X} = \{ x: x^T N x \leq \xi \}$, one can add (to the left side) $-\beta_1 N_1$ to \eqref{thm: LMIs 1}, $+\beta_2 N_1$ to \eqref{thm: LMIs 2} and $+\beta_3 N_2$ to \eqref{thm: LMIs 3} where
	\begin{align*}
		N_1 = \begin{bmatrix}
			-N & 0 & 0 \\ 0 & 0 & 0 \\ 0 & 0 & \xi
		\end{bmatrix}, N_2 = \begin{bmatrix}
			-N & 0 & 0 & 0\\ 0 & 0 & 0 & 0\\ 0 & 0 & 0 & \xi
		\end{bmatrix},
	\end{align*}
	and $\beta_i$ are non-negative scalar variables. 	
\end{remark}

\begin{figure}[t]
	\centering
	\includegraphics[width=0.5\linewidth]{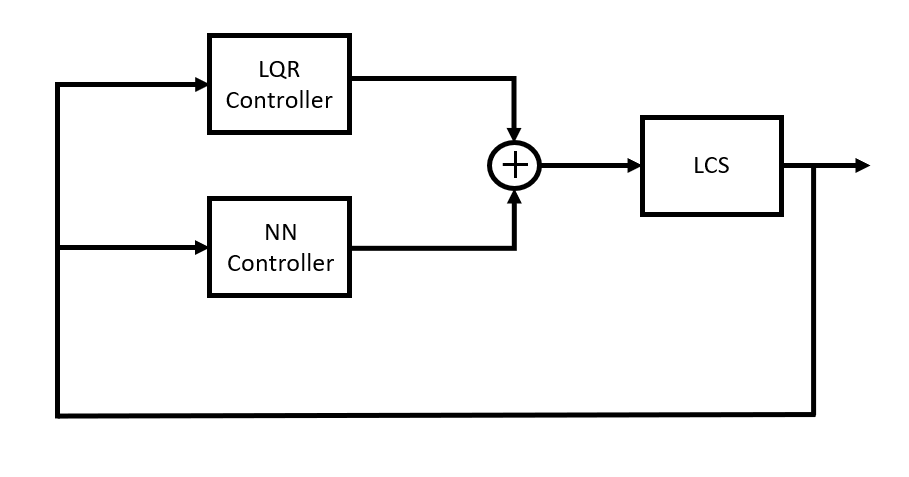}
	\caption{Block diagram of the closed-loop system.}
	\label{fig:cont_arch}
\end{figure}

\section{Examples}

We use YALMIP \cite{lofberg2004yalmip} toolbox with MOSEK \cite{mosek2010mosek} to formulate and solve the linear matrix inequality feasibility problems. PATH \cite{dirkse1995path} has been used to solve the linear complementarity problems when simulating the system. PyTorch is used for training neural network controllers \cite{paszke2017automatic}. The experiments are done on a desktop computer with the processor Intel \emph{i7-4790} and \emph{8GB RAM} unless stated otherwise. For all of the experiments, we consider the closed-loop system in Figure \ref{fig:cont_arch} and the linear-quadratic regulator controller is designed with state penalty matrix $Q^\text{LQR} = 10 I$ and input penalty matrix $R^\text{LQR} = I$ unless stated otherwise.

\subsection{Double Integrator}

In this example, we consider a double integrator model:
\begin{equation*}
	%\label{eq:double_int}
	\begin{aligned}
		& x_{k+1} = A x_k  + B u_k,
	\end{aligned}
\end{equation*}
where $\tilde{A} = \begin{bmatrix}
	1 & 1 \\ 0 & 1
\end{bmatrix}$, $B = \begin{bmatrix}
	0.5 \\ 1
\end{bmatrix}$, and $A = \tilde{A} + B K_\text{LQR}$, where LQR gains are $Q^\text{LQR} = 0.1I$ and $R^\text{LQR} = 1$. This simple model serves as an example where we approximate an explicit model predictive controller (explicit MPC) \cite{bemporad2002explicit} using a neural network and verify the stability of the resulting system. We consider the state and input constraints:
\begin{figure}[t]
	\centering
	\begin{subfigure}[t]{.5\textwidth}
	\centering
	\includegraphics[width=1\linewidth]{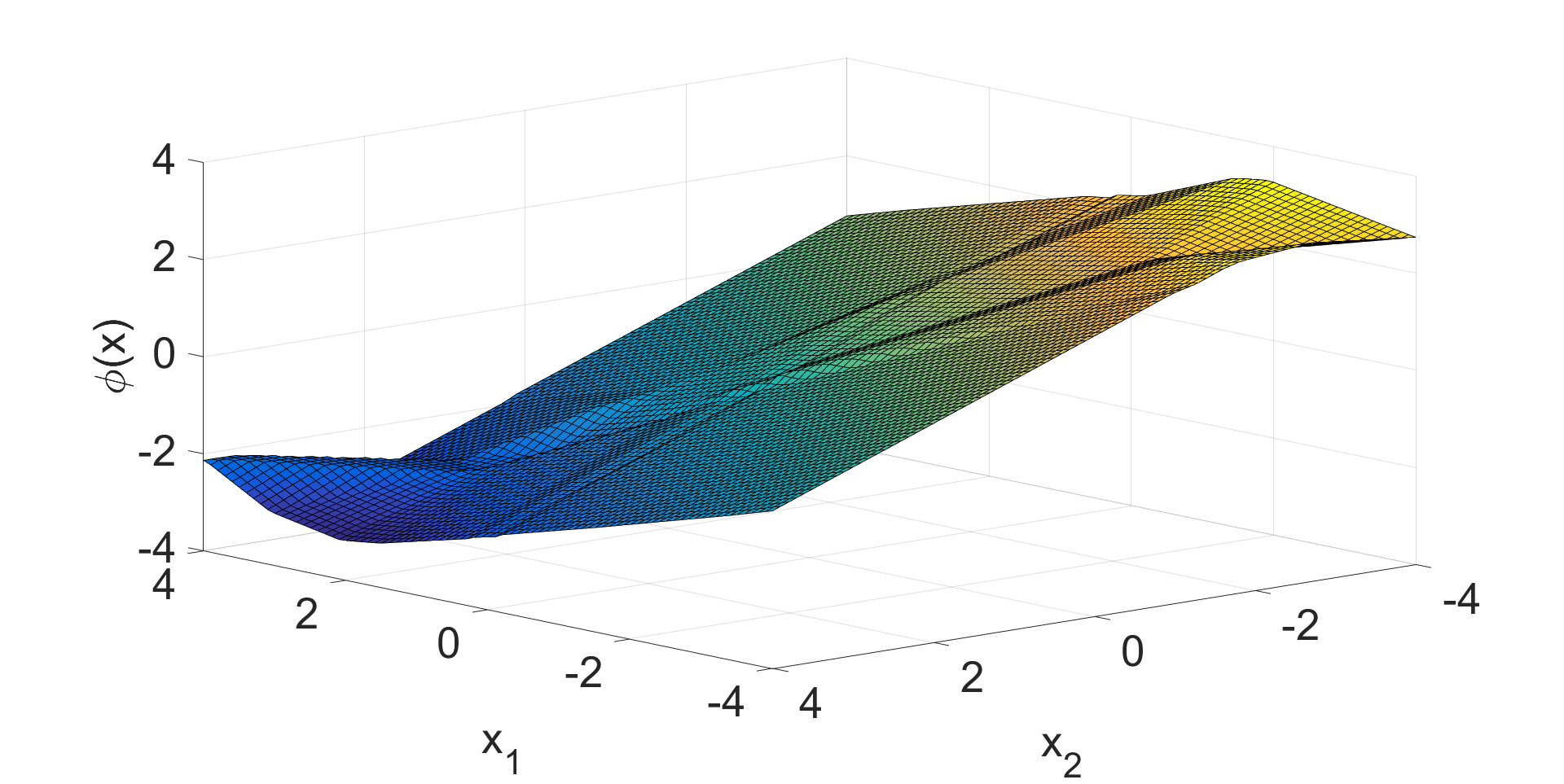}
	\caption{Neural network ($\phi$) policy.}
	\label{fig:policy}
	\end{subfigure}%
	\begin{subfigure}[t]{.5 \textwidth}
		\centering
		\includegraphics[width=1\linewidth]{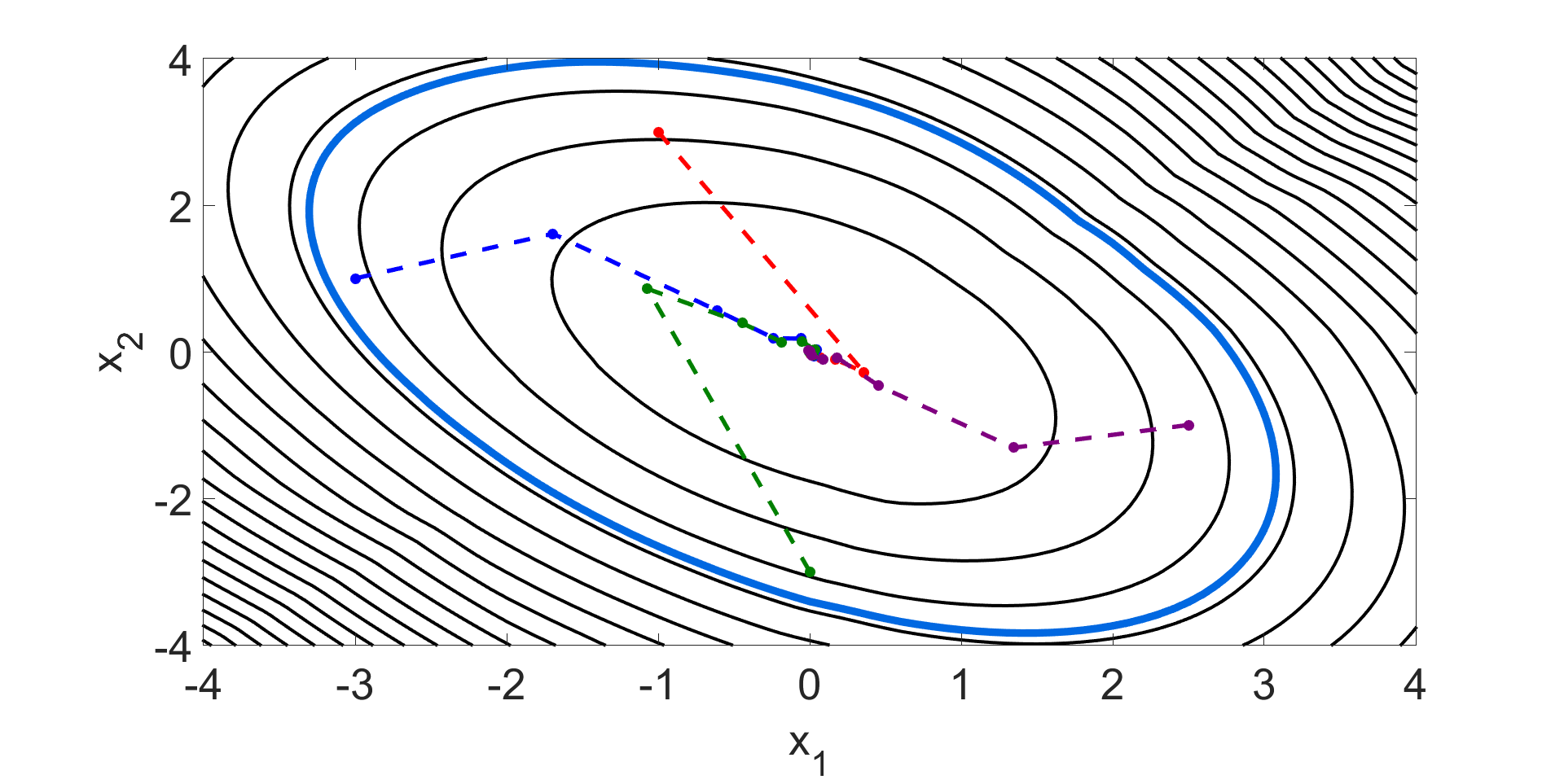}
		\caption{Sublevel sets with four different trajectories. One sublevel set that lies in the constraint set is shown in blue.}
		\label{fig:int_levelsets}
	\end{subfigure}
	\begin{subfigure}{.7 \textwidth}
	\centering
	\includegraphics[width=1\linewidth]{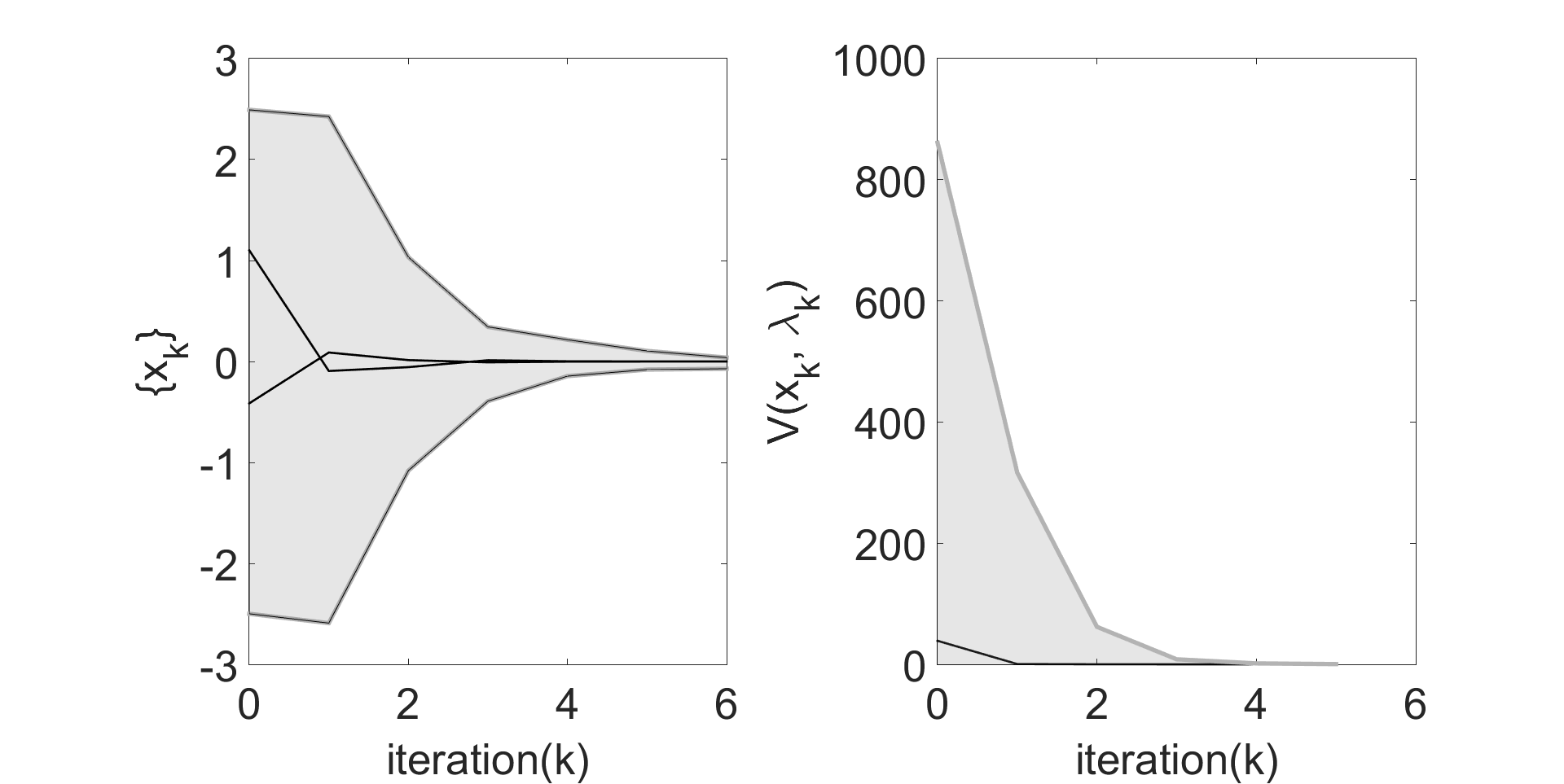}
	\caption{Envelopes for 1000 trajectories and their corresponding Lyapunov functions (in gray) with a sample trajectory (in black).}
	\label{fig:lyap_double_int}
	\end{subfigure}
	\caption{Experimental results for the double integrator example.}
\end{figure}

\iffalse
\begin{figure}[t]
	\centering
	\includegraphics[width=0.7\linewidth]{policy}
	\caption{Neural network ($\phi$) policy for the double-integrator example.}
	\label{fig:policy}
\end{figure}
\begin{figure}
	\centering
	\includegraphics[width=0.7\linewidth]{int_level_sets}
	\caption{Sublevel sets of the piece-wise quadratic Lyapunov function $V(x_k, \lambda_k)$ with four different trajectories for the double integrator example. A sublevel set that lies in the constraint set is shown in blue.}
	\label{fig:int_levelsets}
\end{figure}
\fi

\begin{equation*}
	\mathcal{X} = \{ x : \begin{bmatrix}
		-4 \\ -4
	\end{bmatrix} \leq x \leq \begin{bmatrix}
		4 \\ 4
	\end{bmatrix}  \}, \; \mathcal{U} = \{ u : -3 \leq u \leq 3 \},
\end{equation*}
and obtain 2000 samples of the form $(x, \pi_\text{MPC}(x))$ with $N=10$, $Q^\text{MPC}=10I$, and $R^\text{MPC}=1$. Next we approximate the explicit MPC controller using a ReLU network $\phi(x)$ with two layers and 10 neurons in each layer as in Figure \ref{fig:policy}. Now, consider the closed-loop system:
\begin{equation}
	\label{eq:double_int_linear}
	x_{k+1} = A x_k + B \phi(x_k).
\end{equation}
First, we find the equivalent LCP representation of $\phi(x)$ using Lemma \ref{LCP_ReLU_equivalency}. Then, we write the equivalent LCS representation of \eqref{eq:double_int_linear} as described in Section \ref{sub:complementaritysys_nncont}.
We computed the piece-wise quadratic Lyapunov function of the form \eqref{eq:lyapunov_function} and verified exponential stability in 1.1 seconds. The sublevel sets of the Lyapunov functions are plotted in Figure \ref{fig:int_levelsets}. We also present the envelopes of 1000 trajectories with their corresponding Lyapunov functions in Figure \ref{fig:lyap_double_int}.

\subsection{Cart-pole with Soft Walls}

We consider the regulation problem of a cart-pole with soft-walls as in Figure \ref{fig:cartpole}. This problem has been studied in \cite{marcucci2020warm, deits2019lvis, aydinoglu2020stabilization} and is a benchmark in contact-based control algorithms. In this model, $x_1$ represents the position of the cart, $x_2$ represents the angle of the pole and $x_3$, $x_4$ are their time derivatives respectively. Here, $\lambda_1$ and $\lambda_2$ represent the contact force applied by the soft walls to the pole from the right and left walls, respectively. We consider the linearized model around $x_2 = 0$:
\begin{figure}[t]
	\centering
	\begin{subfigure}[t]{.5\textwidth}
		\centering
		\includegraphics[width=1\linewidth]{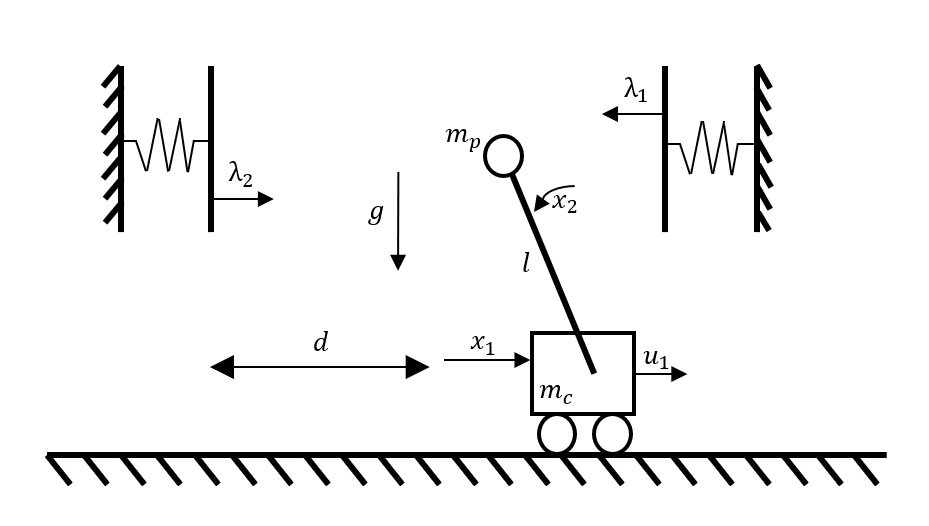}
		\caption{The cart-pole system.}
		\label{fig:cartpole}
	\end{subfigure}%
	\begin{subfigure}[t]{.5 \textwidth}
		\centering
		\includegraphics[width=1\linewidth]{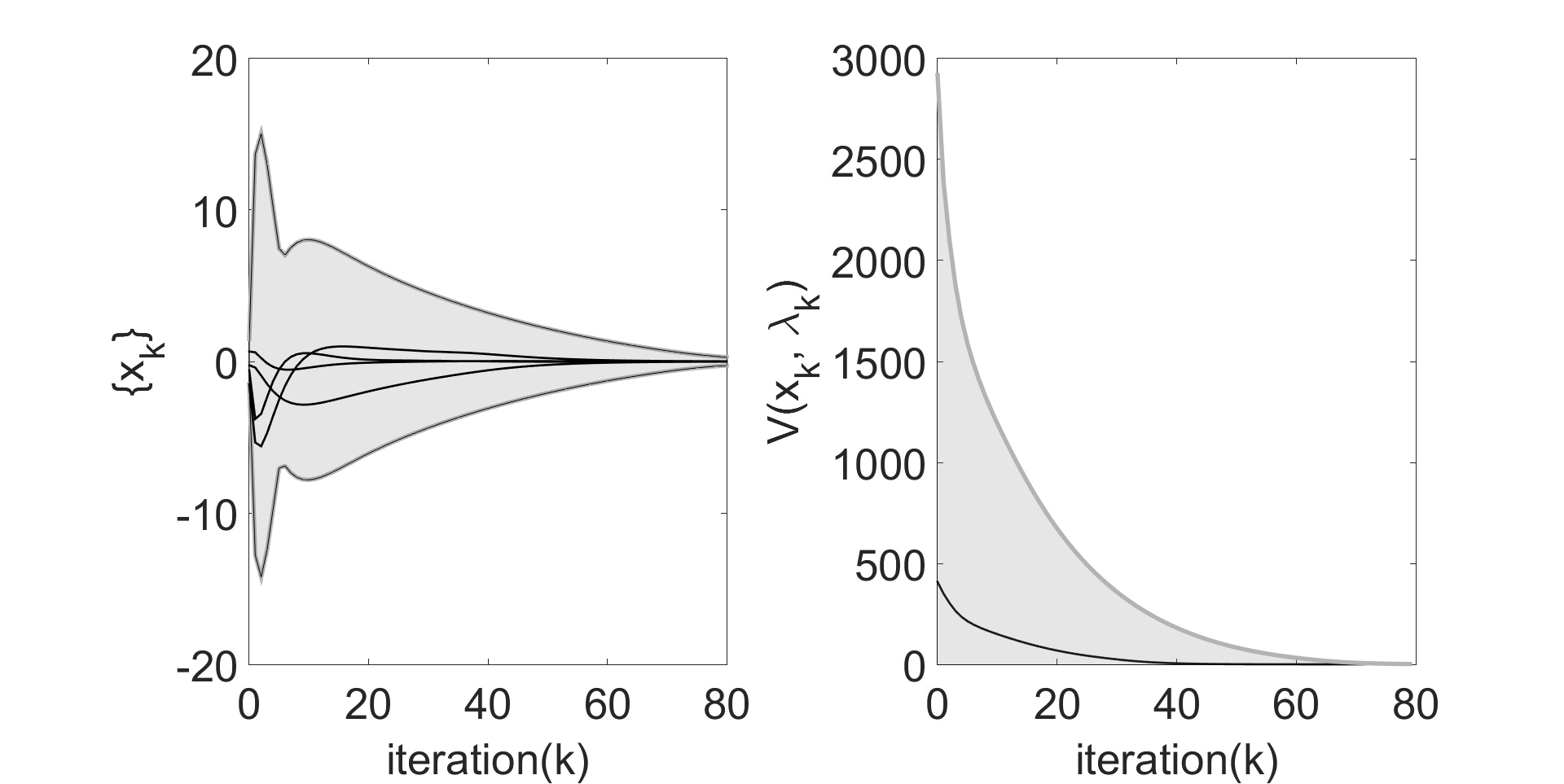}
		\caption{Envelopes for 1000 trajectories and the corresponding Lyapunov functions (in gray) with a sample trajectory (in black).}
		\label{fig:lyap_cartpole}
	\end{subfigure}
	\caption{Experimental results for the cart-pole example.}
\end{figure}
\begin{align*}
	& \dot{x}_1 = x_3, \\
	& \dot{x}_2 = x_4, \\
	& \dot{x}_3 = g \frac{m_p}{m_c} x_2 + \frac{1}{m_c} u_1, \\
	& \dot{x}_4 = \frac{g (m_c + m_p)}{l m_c} x_2 + \frac{1}{l m_c} u_1 + \frac{1}{l m_p} \lambda_1 - \frac{1}{l m_p} \lambda_2, \\
	& 0 \leq \lambda_1 \perp l x_2 - x_1 + \frac{1}{k_1} \lambda_1 + d \geq 0, \\
	& 0 \leq \lambda_2 \perp x_1 - l x_2  + \frac{1}{k_2} \lambda_2 + d \geq 0,
\end{align*}
where $m_c=1$ is the mass of the cart, $m_p = 1$ is the mass of the pole, $l=1$ is the length of the pole, $k_1 = k_2 = 1$ are the stiffness parameter of the walls, $d=1$ is the distance between the origin and the soft walls. Then, we discretize the dynamics using the explicit Euler method with time step $T_s = 0.1$ to obtain the system matrices:

\iffalse
\begin{figure}[t]
	\centering
	\includegraphics[width=\linewidth]{cartpole}
	\caption{Regulation task of a cart-pole system exploiting contact with the soft walls.}
	\label{fig:cartpole}
\end{figure}
\fi
\iffalse
\begin{figure}[b]
	\centering
	\includegraphics[width=1\linewidth]{}
	\caption{Sublevel sets of the Lyapunov function $V(x_k, \lambda_k)$ on the plane $\mathcal{P}= \{ x : x_3 = x_4 = 0\}$ for the cart-pole example.}
	\label{fig:cart_sublevel}
\end{figure}
\fi
$\tilde{A} = \begin{bmatrix}
	1 & 0 & 0.1 & 0 \\
	0 & 1 & 0 & 0.1 \\
	0 & 0.981 & 1 & 0 \\
	0 & 1.962 & 0 & 1
\end{bmatrix}$, $B = \begin{bmatrix}
	0 \\ 0 \\ 0.1 \\ 0.1
\end{bmatrix}$, $\tilde{D} = \begin{bmatrix}
	0 & 0 \\
	0 & 0 \\
	0 & 0 \\
	-0.1 & 0.1 
\end{bmatrix}$, $\tilde{E} = \begin{bmatrix}
	-1 & 1 & 0 & 0 \\ 1 & -1 & 0 & 0
\end{bmatrix}$, $\tilde{F} = \begin{bmatrix}
	1 & 0 \\ 0 & 1
\end{bmatrix}$, $\tilde{c} = \begin{bmatrix}
	1 \\ 1
\end{bmatrix}$, $A = \tilde{A} + B K_{LQR}$
and, $K_{LQR}$ is the gain of the linear-quadratic regulator that stabilizes the linear system $(\tilde{A},B)$. However, the equilibrium $x_e = 0$ is not globally stable due to the soft walls.

\begin{figure}[t]
	\centering
	\begin{subfigure}[t]{.5\textwidth}
		\centering
		\includegraphics[width=0.7\linewidth]{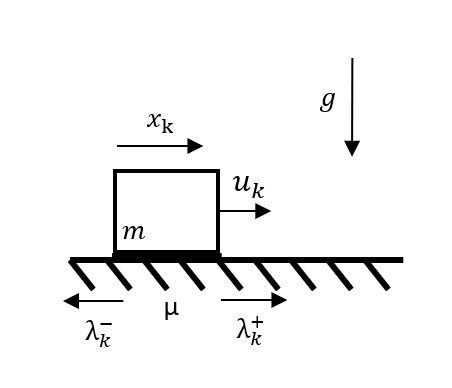}
		\caption{A box on a surface with friction.}
		\label{fig:box}
	\end{subfigure}%
	\begin{subfigure}[t]{.5 \textwidth}
		\centering
		\includegraphics[width=1\linewidth]{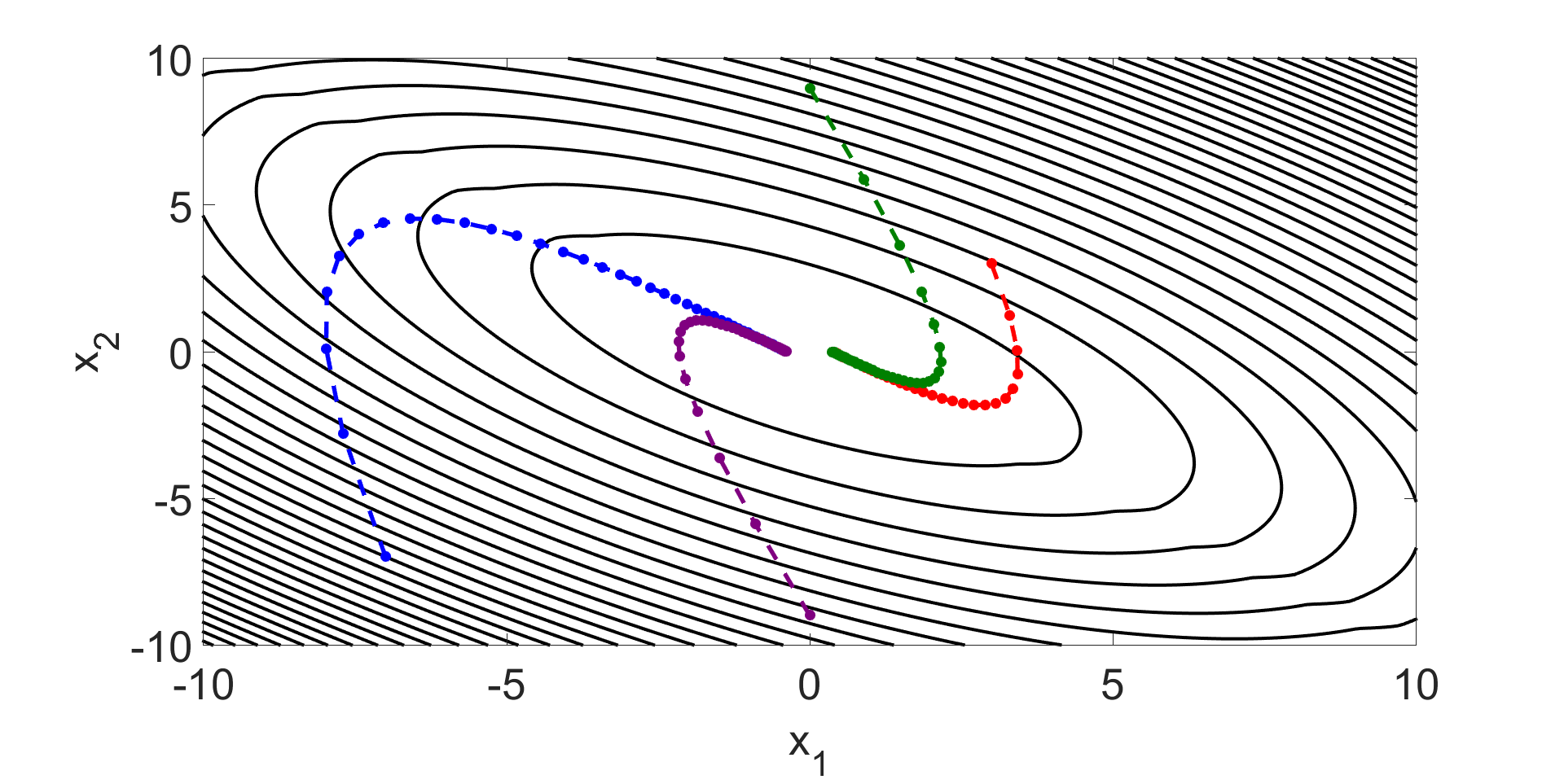}
		\caption{Sublevel sets of the piece-wise quadratic Lyapunov function $V(x_k, \lambda_k)$ with four different trajectories.}
		\label{fig:sublevel_fric}
	\end{subfigure}
	\begin{subfigure}{.7 \textwidth}
		\centering
		\includegraphics[width=1\linewidth]{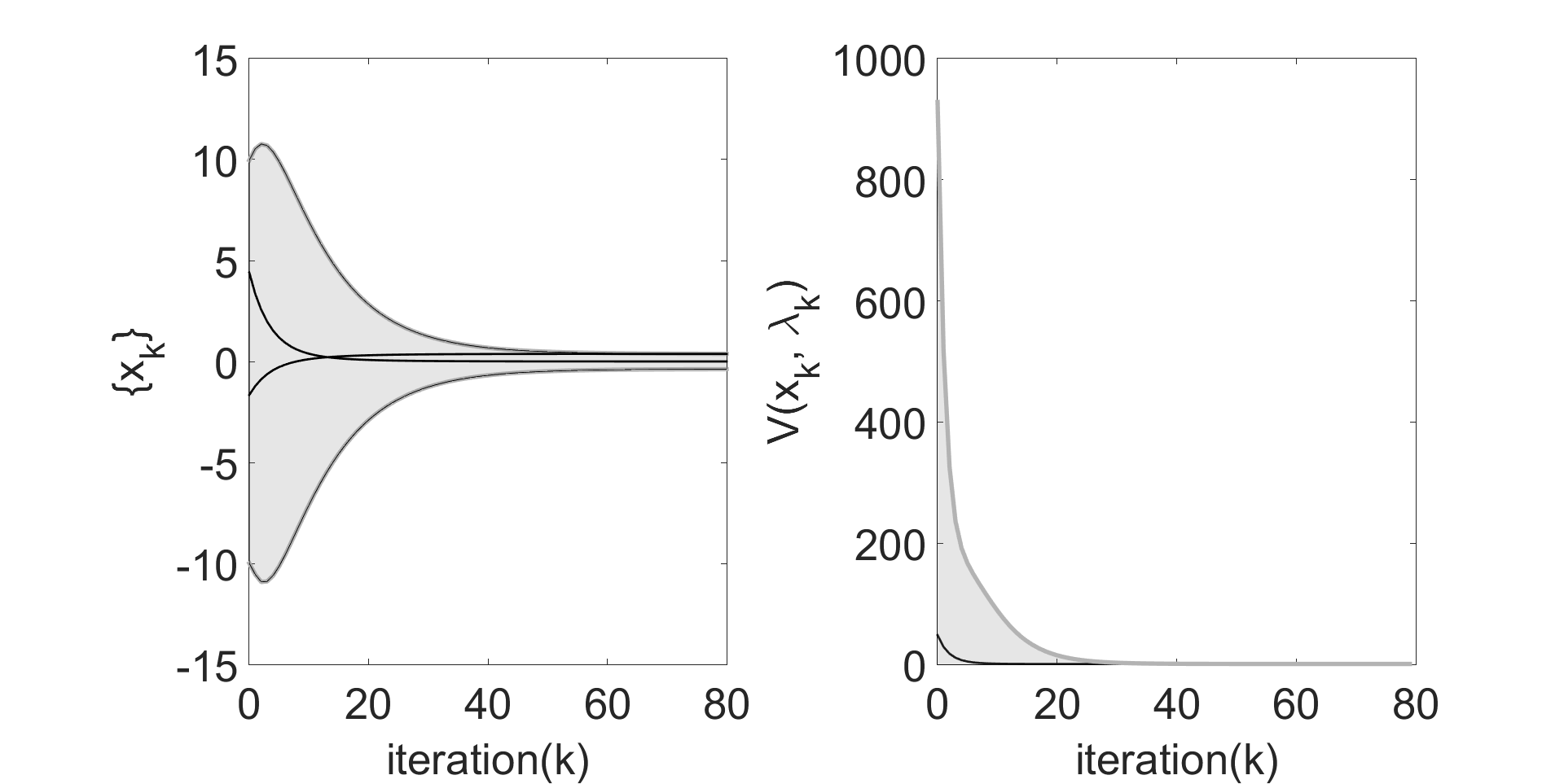}
		\caption{Envelopes for 1000 trajectories (in gray) with a sample trajectory (in black).}
		\label{fig:lyap_box}
	\end{subfigure}
	\caption{Experimental results for the box with friction example.}
\end{figure}

We solve the optimal control problem (Appendix \ref{ap:opt_cont}, \eqref{eq:opt_original}) with $N=10$, $Q^\text{OPT} = 10 I$, $R^\text{OPT}= 1$, and $Q^\text{OPT}_N$ as the solution of the discrete algebraic Riccati equation to generate samples of the form $(x, \pi_\text{OPT}(x))$. For this particular problem, we generate 4000 samples and we train a neural network $\phi(x)$ with two layers, each with 10 neurons, to approximate the optimal controller $\pi_\text{OPT}$. We used the ADAM optimizer to do the training. Then, we analyze the linear complementarity system with the neural network controller $u_k = \phi(x_k)$. Following the procedure in Section \ref{sec:comp_NN}, we first express the neural network as a linear complementarity problem using Lemma \ref{LCP_ReLU_equivalency} and then transform the LCS with the NN controller into the form \eqref{eq:LCS_Controller}. We compute a Lyapunov function of the form \eqref{eq:lyapunov_function} in 1.7 seconds that verifies that the closed-loop system with the neural network controller $\phi(x)$ is globally exponentially stable. For this example, a common Lyapunov function is enough to verify stability.
In Figure \ref{fig:lyap_cartpole}, we present the envelopes for 1000 trajectories.

\subsection{Box with Friction}

In this example, we consider the regulation task of a box on a surface as in Figure \ref{fig:box}. 
This simple model serves as an example where the contact forces $\lambda_k$ are not unique due to Coulomb friction between the surface and the box.
Here, $x_1$ is the position of the cart, $x_2$ is the velocity of the cart, $u$ is the input applied to the cart, $g=9.81$ is the gravitational acceleration, $m=1$ is the mass of the cart, and $\mu = 0.1$ is the coefficient of friction between the cart and the surface. 
The system can be modeled by:
\begin{equation}
	\label{eq:box_friction_equations}
	\begin{aligned}
		& x_{k+1} = A x_k + B u_k + \tilde{D} \tilde{\lambda}_k, \\	
		& 0 \leq \tilde{\lambda}_k \perp \tilde{E} x_k + \tilde{F} \tilde{\lambda}_k + \tilde{c} \geq 0,
	\end{aligned}
\end{equation}
where $\tilde{A} = \begin{bmatrix}
	1 & 0.1 \\ 0 & 1
\end{bmatrix}$, $B = \begin{bmatrix}
	0 \\ 0.1
\end{bmatrix}$, $\tilde{D} = \begin{bmatrix}
	0 & 0 & 0 \\ 0.1 & -0.1 & 0
\end{bmatrix}$, $\bar{E} = \begin{bmatrix}
	0 & 1 \\ 0 & -1 \\ 0 & 0
\end{bmatrix}$, $\tilde{F} = \begin{bmatrix}
	1 & -1 & 1 \\ -1 & 1 & 1 \\ -1 & -1 & 0
\end{bmatrix}$, $\tilde{c} = \begin{bmatrix}
	0 \\ 0 \\ 0.981
\end{bmatrix}$, $H = \begin{bmatrix}
	1 \\ -1 \\ 0 \end{bmatrix}$, $\tilde{E} = \bar{E} + H K_{LQR}$, $A = \tilde{A} + B K_{LQR}$
and, $K_{LQR}$ is the gain that (the linear-quadratic regulator controller) stabilizes the linear system $(\tilde{A},B)$. Observe that the matrix $\tilde{F}$ is not a P-matrix, hence for a given $x_k$, the contact forces $\lambda_k$ are not unique.
Similar to the previous example, we generate 2000 samples $(x, \pi_\text{OPT}(x))$ for the LCS in \eqref{eq:box_friction_equations} with $N=5$, $Q^\text{OPT} = Q^\text{OPT}_N = 0.1$, $R^\text{OPT}=1$ and train a neural network $\phi(x)$ that approximates the optimal controller. Then we convert the system in \eqref{eq:box_friction_equations} with the neural network controller $\phi(x)$ into the form \eqref{eq:LCS_Controller}. Next, we compute the piece-wise quadratic Lyapunov function (with sublevel sets shown in Figure \ref{fig:sublevel_fric}) of the form \eqref{eq:lyapunov_function} in 1.6 seconds such that the exponential stability condition is verified outside a ball around the origin, $\mathcal{D} = \{ x : ||x||_2^2 > 0.6 \}$. More precisely, we prove convergence to a set (smallest sublevel set of $V$ that contains $\mathcal{D}$) which contains the equilibrium. This is expected because the trajectories do not reach the origin due to stiction. We demonstrate the envelopes for 1000 trajectories and their respective Lyapunov functions in Figure \ref{fig:lyap_box}.

\begin{figure}[b]
	\centering
	\includegraphics[width=1\linewidth]{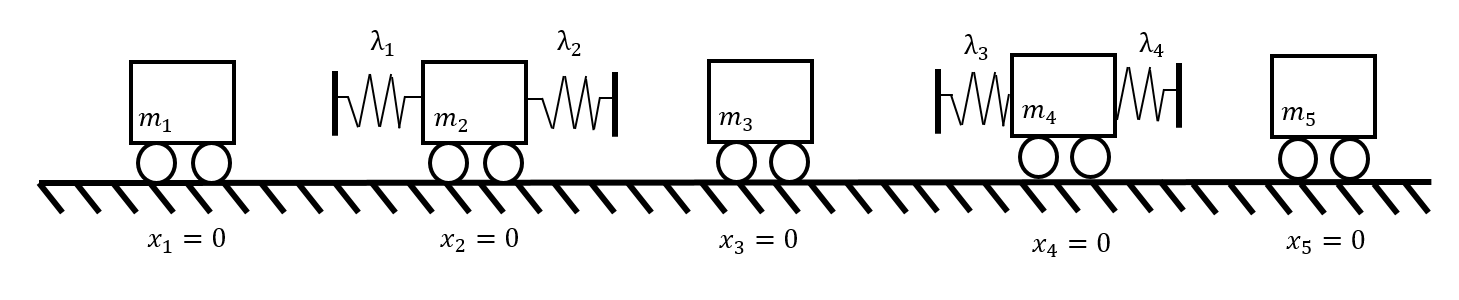}
	\caption{Regulation task of five carts to their respective origins.}
	\label{fig:five_carts}
\end{figure}

\subsection{Five Carts}

We consider the regulation task of five carts as in Figure \ref{fig:five_carts}. Here $x_i$ describes the state of the $i$-th cart, the interaction between the carts is modeled by soft springs represented by $\lambda_i$, and all carts can be controlled via the applied force $u_i$. We approximate Newtons’s second law with a force balance equation and obtain the following quasi-static model:
\begin{align*}
	& x_{k+1}^{(1)} = x_k^{(1)} + u_k^{(1)} - \lambda_k^{(1)}, \\
	& x_{k+1}^{(i)} = x_k^{(i)} + u_k^{(i)} - \lambda_k^{(i-1)} - \lambda_k^{(i)}, \; \text{for} \; i=2,3,4, \\
	& x_{k+1}^{(5)} = x_k^{(5)} + u_k^{(5)} + \lambda_k^{(4)}, \\
	& 0 \leq \lambda_k^{(i)} \perp x_k^{(i+1)} - x_k^{(i)} + \lambda_k^{(i)} \geq 0.
\end{align*}
We designed an LQR controller with with state penalty matrix $Q^\text{LQR} = I$ and input penalty matrix $R^\text{LQR} = I$. Then, we solve the optimal control problem (Appendix \ref{ap:opt_cont}, \eqref{eq:opt_original}) with $N=10$, $Q^\text{OPT} = Q^\text{OPT}_N = 10 I$, and $R^\text{OPT}= 1$ to generate 2000 samples of the form $(x, \pi_\text{OPT}(x))$.
Using these samples, we train $\phi(x)$ with two layers of size 10 and express the neural network as a linear complementarity problem using Lemma \ref{LCP_ReLU_equivalency}. 

\begin{figure}[t]
	\centering
	\begin{subfigure}[t]{.5 \textwidth}
		\centering
		\includegraphics[width=1\linewidth]{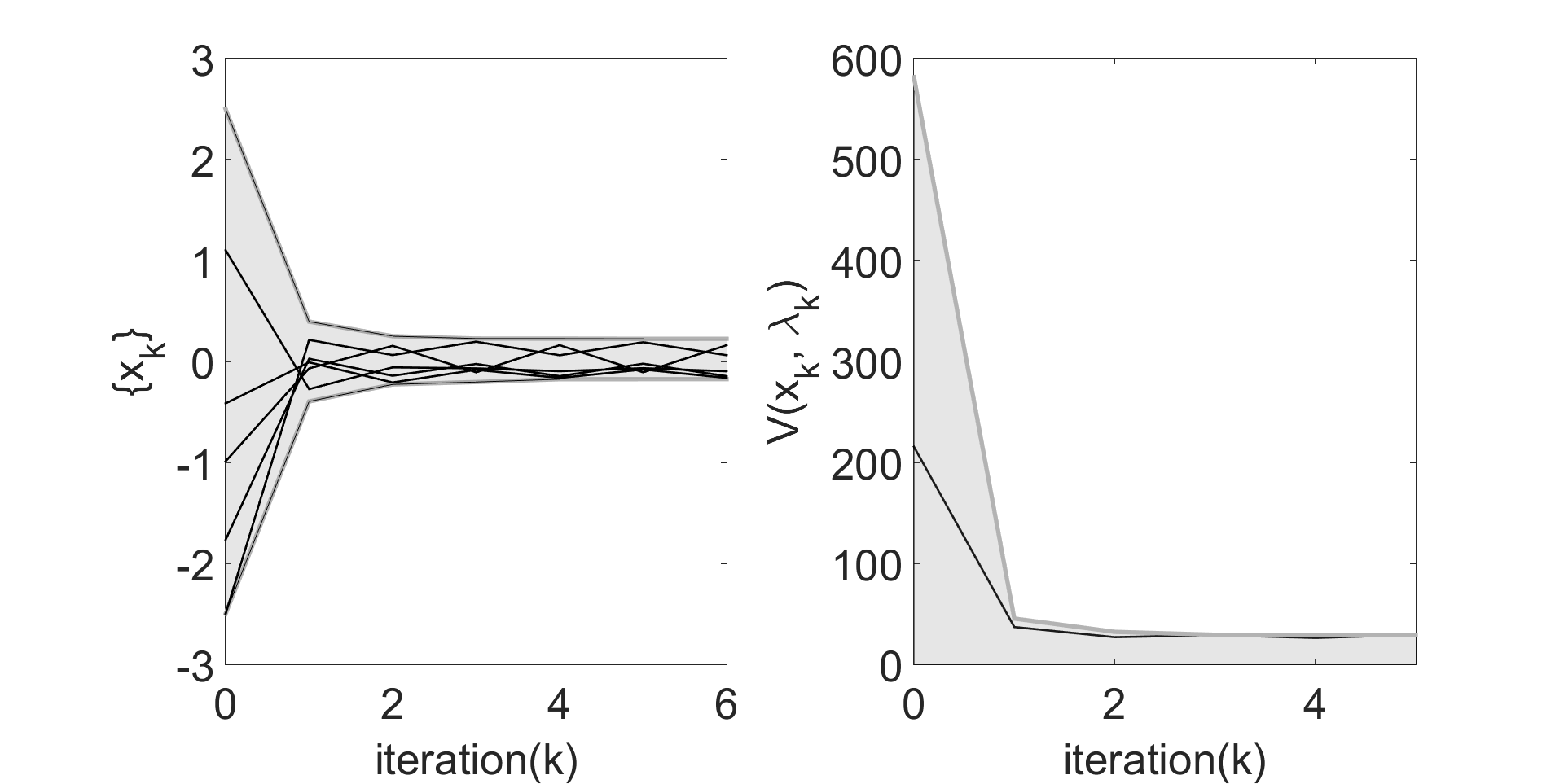}
		\caption{Envelopes for trajectories.}
		\label{fig:lyap_five}
	\end{subfigure}%
	\begin{subfigure}[t]{.5\textwidth}
		\centering
		\includegraphics[width=1\linewidth]{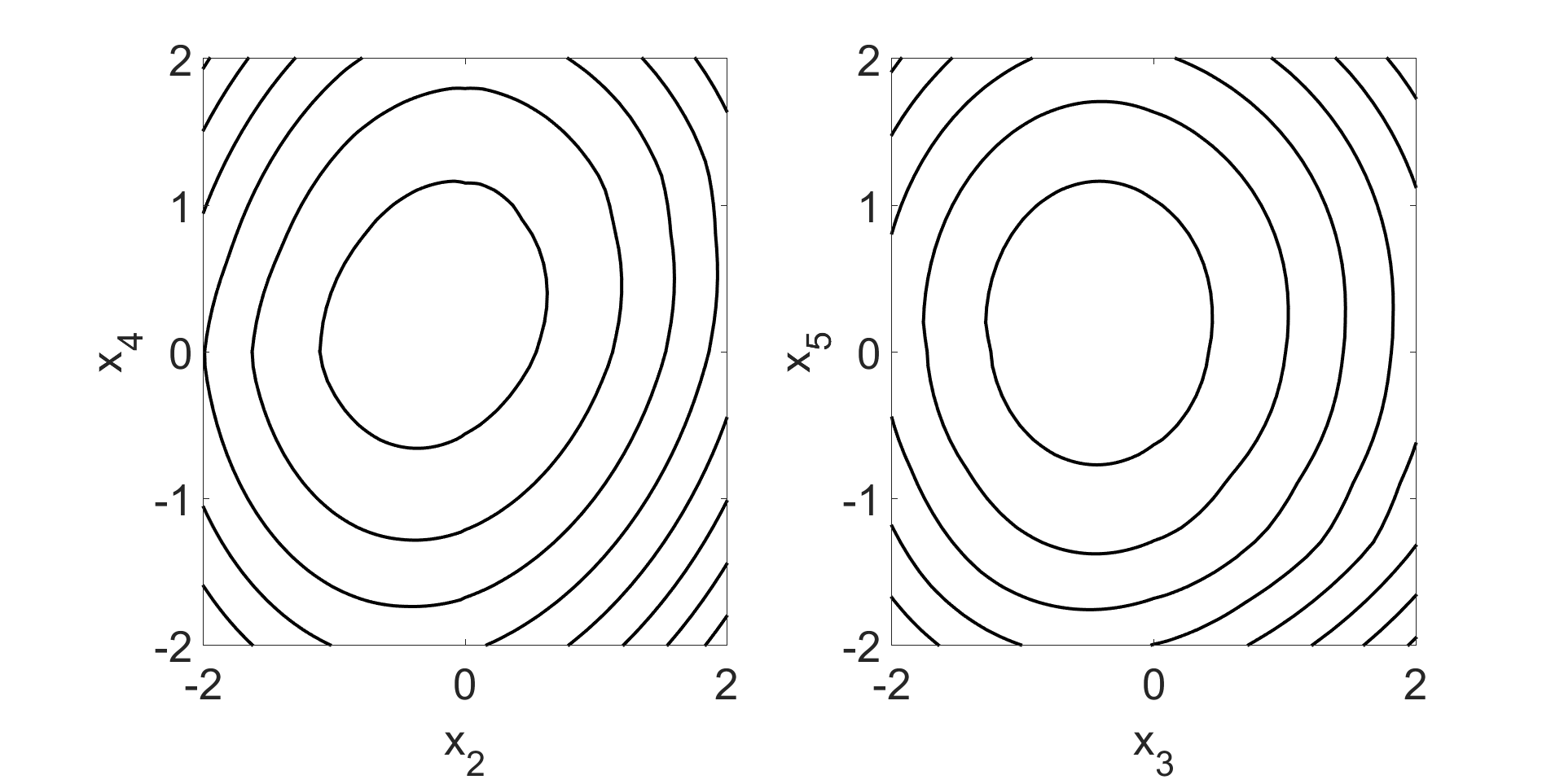}
		\caption{Sublevel sets of the piece-wise quadratic Lyapunov function for the five carts example on the planes $\mathcal{P}_1~=~\{x : x_1 = x_3 = x_5 = 0\}$ and $\mathcal{P}_2 = \{x : x_1 = x_2 = x_4 = 0\}$ respectively.}
		\label{fig:sublevel_five}
	\end{subfigure}
	\caption{Experimental results for the five carts example.}
\end{figure}

We compute a piece-wise quadratic Lyapunov function of the form \eqref{eq:lyapunov_function} in 2.1 seconds (sub-level sets as in Figure \ref{fig:sublevel_five}) that verifies that the closed-loop system with the neural network controller $\phi(x)$ is globally exponentially stable outside a ball $\mathcal{D} = \{ x : ||x||_2^2 > 0.1 \}$. We also verified that there isn't a common Lyapunov function that satisfies the LMI's in (\ref{thm: LMIs}). We note that a common Lyapunov function that satisfies Theorem \ref{stability_theorem} might exist, but no such function satisfies our relaxation in (\ref{thm: LMIs}). On the other hand, this demonstrates the importance of searching over a wider class of functions.
In Figure \ref{fig:lyap_five}, we present the envelopes for 1000 trajectories and the corresponding Lyapunov functions.
We note that \emph{memory} is the limiting factor in terms of scalability of our method and present scalability tests in Table \ref{tab:memory}.

\section{Conclusion and Future Work}

In this work, we have shown that neural networks with ReLU activation functions have an equivalent linear complementarity problem representation. Furthermore, we have shown that a linear complementarity system with a ReLU neural network controller can be transformed into an LCS with a higher dimensional complementarity variable. This allows one to use the existing literature on linear complementarity systems when analyzing an LCS with NN controller. 

\begin{table}[t]
	\centering
	\begin{tabular}{|c|c|c|}
		\hline
		\textbf{RAM}       & \textbf{Number of neurons} & \textbf{Solve time} \\ \hline
		\textit{8GB RAM}   & 20                         & 2.1 seconds         \\ \hline
		\textit{8GB RAM}   & 60                         & 194.72 seconds      \\ \hline
		\textit{8GB RAM}   & 100                        & OOM       \\ \hline
		\textit{16GB RAM}  & 100                        & 1364.78 seconds     \\ \hline
		\textit{16GB RAM}  & 140                        & OOM      \\ \hline
	\end{tabular}
	\caption{Scalability tests.}
	\label{tab:memory}
\end{table}

Towards this direction, we have derived the discrete-time version of the stability results in \cite{camlibel2007lyapunov} and shown that searching for a Lyapunov function for an LCS with ReLU NN controller is equivalent to finding a feasible solution to a set of linear matrix inequalities. The proposed method exploits the complementarity structure of both the system and the NN controller and avoids enumerating the exponential number of potential modes. We have also demonstrated the effectiveness of our method on numerical examples, including a difference inclusion model.

As future work, we are planning to explore tools from algebraic geometry that use samples instead of the S-procedure terms which result in a stronger relaxation \cite{cifuentes2017sampling}. Also, we consider using passivity results \cite{miranda2018dominance} in order to develop algorithms that can verify the stability for larger neural networks. At last, it is of interest to learn stabilizing neural network controllers utilizing the complementarity viewpoint.

\section*{Acknowledgment}
	The authors would like to thank Yike Li (University of
	Pennsylvania) for the code that computes the optimal control sequence for a given LCS. This work was supported by the National Science Foundation under Grant No. CMMI-1830218.

\bibliographystyle{plain}
\bibliography{sample-base,ref_accpm}

\begin{thebibliography}{10}

\bibitem{amos2017optnet}
Brandon Amos and J~Zico Kolter.
\newblock Optnet: Differentiable optimization as a layer in neural networks.
\newblock {\em arXiv preprint arXiv:1703.00443}, 2017.

\bibitem{aydinoglu2020contact}
Alp Aydinoglu, Victor~M Preciado, and Michael Posa.
\newblock Contact-aware controller design for complementarity systems.
\newblock In {\em 2020 IEEE International Conference on Robotics and Automation
  (ICRA)}, pages 1525--1531. IEEE, 2020.

\bibitem{aydinoglu2020stabilization}
Alp Aydinoglu, Victor~M Preciado, and Michael Posa.
\newblock Stabilization of complementarity systems via contact-aware
  controllers.
\newblock {\em arXiv preprint arXiv:2008.02104}, 2020.

\bibitem{bemporad2002explicit}
Alberto Bemporad, Manfred Morari, Vivek Dua, and Efstratios~N Pistikopoulos.
\newblock The explicit linear quadratic regulator for constrained systems.
\newblock {\em Automatica}, 38(1):3--20, 2002.

\bibitem{boyd1994linear}
Stephen Boyd, Laurent El~Ghaoui, Eric Feron, and Venkataramanan Balakrishnan.
\newblock {\em Linear matrix inequalities in system and control theory}.
\newblock SIAM, 1994.

\bibitem{brogliato1999nonsmooth}
Bernard Brogliato.
\newblock {\em Nonsmooth mechanics}.
\newblock Springer, 1999.

\bibitem{camlibel2007lyapunov}
M~Kanat Camlibel, Jong-Shi Pang, and Jinglai Shen.
\newblock Lyapunov stability of complementarity and extended systems.
\newblock {\em SIAM Journal on Optimization}, 17(4):1056--1101, 2007.

\bibitem{chen2020learning}
Shaoru Chen, Mahyar Fazlyab, Manfred Morari, George~J Pappas, and Victor~M
  Preciado.
\newblock Learning lyapunov functions for piecewise affine systems with neural
  network controllers.
\newblock {\em arXiv preprint arXiv:2008.06546}, 2020.

\bibitem{chen2018approximating}
Steven Chen, Kelsey Saulnier, Nikolay Atanasov, Daniel~D Lee, Vijay Kumar,
  George~J Pappas, and Manfred Morari.
\newblock Approximating explicit model predictive control using constrained
  neural networks.
\newblock In {\em 2018 Annual American control conference (ACC)}, pages
  1520--1527. IEEE, 2018.

\bibitem{cifuentes2017sampling}
Diego Cifuentes and Pablo~A Parrilo.
\newblock Sampling algebraic varieties for sum of squares programs.
\newblock {\em SIAM Journal on Optimization}, 27(4):2381--2404, 2017.

\bibitem{cottle2009linear}
Richard~W Cottle, Jong-Shi Pang, and Richard~E Stone.
\newblock {\em The linear complementarity problem}.
\newblock SIAM, 2009.

\bibitem{deits2019lvis}
Robin Deits, Twan Koolen, and Russ Tedrake.
\newblock Lvis: Learning from value function intervals for contact-aware robot
  controllers.
\newblock In {\em 2019 International Conference on Robotics and Automation
  (ICRA)}, pages 7762--7768. IEEE, 2019.

\bibitem{dirkse1995path}
Steven~P Dirkse and Michael~C Ferris.
\newblock The path solver: a nommonotone stabilization scheme for mixed
  complementarity problems.
\newblock {\em Optimization methods and software}, 5(2):123--156, 1995.

\bibitem{dontchev1992difference}
Asen Dontchev and Frank Lempio.
\newblock Difference methods for differential inclusions: a survey.
\newblock {\em SIAM review}, 34(2):263--294, 1992.

\bibitem{fazlyab2019probabilistic}
Mahyar Fazlyab, Manfred Morari, and George~J Pappas.
\newblock Probabilistic verification and reachability analysis of neural
  networks via semidefinite programming.
\newblock In {\em 2019 IEEE 58th Conference on Decision and Control (CDC)},
  pages 2726--2731. IEEE, 2019.

\bibitem{fazlyab2019safety}
Mahyar Fazlyab, Manfred Morari, and George~J Pappas.
\newblock Safety verification and robustness analysis of neural networks via
  quadratic constraints and semidefinite programming.
\newblock {\em arXiv preprint arXiv:1903.01287}, 2019.

\bibitem{fazlyab2019efficient}
Mahyar Fazlyab, Alexander Robey, Hamed Hassani, Manfred Morari, and George
  Pappas.
\newblock Efficient and accurate estimation of lipschitz constants for deep
  neural networks.
\newblock In {\em Advances in Neural Information Processing Systems}, pages
  11427--11438, 2019.

\bibitem{haarnoja2018learning}
Tuomas Haarnoja, Sehoon Ha, Aurick Zhou, Jie Tan, George Tucker, and Sergey
  Levine.
\newblock Learning to walk via deep reinforcement learning.
\newblock {\em arXiv preprint arXiv:1812.11103}, 2018.

\bibitem{halm2018quasi}
Mathew Halm and Michael Posa.
\newblock A quasi-static model and simulation approach for pushing, grasping,
  and jamming.
\newblock In {\em International Workshop on the Algorithmic Foundations of
  Robotics}, pages 491--507. Springer, 2018.

\bibitem{heemels2001equivalence}
Wilhemus~PMH Heemels, Bart De~Schutter, and Alberto Bemporad.
\newblock Equivalence of hybrid dynamical models.
\newblock {\em Automatica}, 37(7):1085--1091, 2001.

\bibitem{heemels2000linear}
WPMH Heemels, Johannes~M Schumacher, and S~Weiland.
\newblock Linear complementarity systems.
\newblock {\em SIAM journal on applied mathematics}, 60(4):1234--1269, 2000.

\bibitem{hertneck2018learning}
Michael Hertneck, Johannes K{\"o}hler, Sebastian Trimpe, and Frank
  Allg{\"o}wer.
\newblock Learning an approximate model predictive controller with guarantees.
\newblock {\em IEEE Control Systems Letters}, 2(3):543--548, 2018.

\bibitem{karg2020efficient}
Benjamin Karg and Sergio Lucia.
\newblock Efficient representation and approximation of model predictive
  control laws via deep learning.
\newblock {\em IEEE Transactions on Cybernetics}, 50(9):3866--3878, 2020.

\bibitem{karg2020stability}
Benjamin Karg and Sergio Lucia.
\newblock Stability and feasibility of neural network-based controllers via
  output range analysis.
\newblock {\em arXiv preprint arXiv:2004.00521}, 2020.

\bibitem{khalil2002nonlinear}
Hassan~K Khalil and Jessy~W Grizzle.
\newblock {\em Nonlinear systems}, volume~3.
\newblock Prentice hall Upper Saddle River, NJ, 2002.

\bibitem{lofberg2004yalmip}
Johan Lofberg.
\newblock Yalmip: A toolbox for modeling and optimization in matlab.
\newblock In {\em 2004 IEEE international conference on robotics and automation
  (IEEE Cat. No. 04CH37508)}, pages 284--289. IEEE, 2004.

\bibitem{marcucci2020warm}
Tobia Marcucci and Russ Tedrake.
\newblock Warm start of mixed-integer programs for model predictive control of
  hybrid systems.
\newblock {\em IEEE Transactions on Automatic Control}, 2020.

\bibitem{miranda2018dominance}
Felix~A Miranda-Villatoro, Fulvio Forni, and Rodolphe Sepulchre.
\newblock Dominance analysis of linear complementarity systems.
\newblock {\em arXiv preprint arXiv:1802.00284}, 2018.

\bibitem{mosek2010mosek}
APS Mosek.
\newblock The mosek optimization software.
\newblock {\em Online at http://www. mosek. com}, 54(2-1):5, 2010.

\bibitem{o2020operator}
Brendan O'Donoghue.
\newblock Operator splitting for a homogeneous embedding of the monotone linear
  complementarity problem.
\newblock {\em arXiv preprint arXiv:2004.02177}, 2020.

\bibitem{parisini1995receding}
Thomas Parisini and Riccardo Zoppoli.
\newblock A receding-horizon regulator for nonlinear systems and a neural
  approximation.
\newblock {\em Automatica}, 31(10):1443--1451, 1995.

\bibitem{paszke2017automatic}
Adam Paszke, Sam Gross, Soumith Chintala, Gregory Chanan, Edward Yang, Zachary
  DeVito, Zeming Lin, Alban Desmaison, Luca Antiga, and Adam Lerer.
\newblock Automatic differentiation in pytorch.
\newblock 2017.

\bibitem{posa2014direct}
Michael Posa, Cecilia Cantu, and Russ Tedrake.
\newblock A direct method for trajectory optimization of rigid bodies through
  contact.
\newblock {\em The International Journal of Robotics Research}, 33(1):69--81,
  2014.

\bibitem{posa2015stability}
Michael Posa, Mark Tobenkin, and Russ Tedrake.
\newblock Stability analysis and control of rigid-body systems with impacts and
  friction.
\newblock {\em IEEE Transactions on Automatic Control}, 61(6):1423--1437, 2015.

\bibitem{raghunathan2018semidefinite}
Aditi Raghunathan, Jacob Steinhardt, and Percy~S Liang.
\newblock Semidefinite relaxations for certifying robustness to adversarial
  examples.
\newblock In {\em Advances in Neural Information Processing Systems}, pages
  10877--10887, 2018.

\bibitem{scholtes2012introduction}
Stefan Scholtes.
\newblock {\em Introduction to piecewise differentiable equations}.
\newblock Springer Science \& Business Media, 2012.

\bibitem{smirnov2002introduction}
Georgi~V Smirnov.
\newblock {\em Introduction to the theory of differential inclusions},
  volume~41.
\newblock American Mathematical Soc., 2002.

\bibitem{stewart1996implicit}
David~E Stewart and Jeffrey~C Trinkle.
\newblock An implicit time-stepping scheme for rigid body dynamics with
  inelastic collisions and coulomb friction.
\newblock {\em International Journal for Numerical Methods in Engineering},
  39(15):2673--2691, 1996.

\bibitem{xie2018feedback}
Zhaoming Xie, Glen Berseth, Patrick Clary, Jonathan Hurst, and Michiel van~de
  Panne.
\newblock Feedback control for cassie with deep reinforcement learning.
\newblock In {\em 2018 IEEE/RSJ International Conference on Intelligent Robots
  and Systems (IROS)}, pages 1241--1246. IEEE, 2018.

\bibitem{yin2020stability}
He~Yin, Peter Seiler, and Murat Arcak.
\newblock Stability analysis using quadratic constraints for systems with
  neural network controllers.
\newblock {\em arXiv preprint arXiv:2006.07579}, 2020.

\bibitem{zhang2019safe}
Xiaojing Zhang, Monimoy Bujarbaruah, and Francesco Borrelli.
\newblock Safe and near-optimal policy learning for model predictive control
  using primal-dual neural networks.
\newblock In {\em 2019 American Control Conference (ACC)}, pages 354--359.
  IEEE, 2019.

\bibitem{zhu2019dexterous}
Henry Zhu, Abhishek Gupta, Aravind Rajeswaran, Sergey Levine, and Vikash Kumar.
\newblock Dexterous manipulation with deep reinforcement learning: Efficient,
  general, and low-cost.
\newblock In {\em 2019 International Conference on Robotics and Automation
  (ICRA)}, pages 3651--3657. IEEE, 2019.

\end{thebibliography}

\appendix
\section{Optimal Control of LCS}
\label{ap:opt_cont}
Given an LCS \eqref{eq:LCS} and an initial condition $x_0$, the optimal control problem is connecting complementarity constraints into equivalent big-M mixed integer constraints:
\begin{equation}
	\label{eq:opt_original}
	\begin{aligned}
		\min_{x_k, \lambda_k, u_k} \quad & \sum_{k=0}^{N-1} x_k^T Q^\text{OPT} x_t + u_t^T R^\text{OPT} u_t  + x_k^T Q^\text{OPT}_N x_k \\
		\textrm{s.t.} \quad &x_{k+1} = A x_k + B u_k + D\lambda_k + z, \\
		& M_1 s_k \geq E x_k + F \lambda_k + H u_k + c \geq 0, \\
		& M_2 (\mathbf{1} - s_k) \geq \lambda_k \geq 0, \\
		& s_k \in \{ 0,1  \}^m, \\
	\end{aligned}
\end{equation}
where $\mathbf{1}$ is a vector of ones, and $M_1$, $M_2$ are scalars that are used for the big M method. Notice that the optimization problem \eqref{eq:opt_original} is a mixed integer quadratic program and the optimal solution can be found using branch and bound algorithms.

In this work, we consider function $\pi_\text{OPT}(x_0)$ that returns the first element of the optimal input sequence, $u^*_0$, for a given $x_0$ and learn this function using a neural network $\phi(x)$.

\section{MPC Controller For LTI Systems}
Given an initial condition $x_0$ and an LTI system:
\begin{equation*}
	x_{k+1} = A x_k + B u_k,
\end{equation*}
we consider the following optimal control problem:
\begin{equation}
	\label{eq:MPC_original}
	\begin{aligned}
		\min_{x_k, u_k} \quad & \sum_{k=0}^N x_k^T Q^\text{MPC} x_t + u_t^T R^\text{MPC} u_t  \\
		\textrm{s.t.} \quad &x_{k+1} = A x_k + B u_k, \\
		& x \in \mathcal{X}, u \in \mathcal{U}, \\
	\end{aligned}
\end{equation}
where $\mathcal{X}$ and $\mathcal{U}$ are convex sets that represent the state and input constraints respectively. In this work, we consider the function $\pi_{MPC}(x_0)$ that returns the first element of the optimal input sequence $u_0^*$ for a given $x_0$. We approximate the function $\pi_\text{MPC}(x)$ with a neural network $\phi(x)$.

\end{document}